\documentclass[letterpaper,USenglish,cleveref,autoref,thm-restate]{lipics/lipics-v2021}
\hideLIPIcs
\nolinenumbers

\usepackage{amsmath}
\usepackage{amssymb}
\usepackage{bbm}
\usepackage{mathtools}
\usepackage{makecell}
\usepackage{natbib}
\usepackage{todonotes}
\usepackage{xfrac}
\usepackage{cleveref}
\usepackage{ifthen}

\graphicspath{ {./figures/} }

\makeatletter
\AtBeginDocument{%
  \@ifundefined{r@prop:demand-functions}{}{%
    \global\expandafter\let\csname r@prop:demand_curve\expandafter\endcsname
      \csname r@prop:demand-functions\endcsname
    \global\expandafter\let\csname r@prop:demand_curve@cref\expandafter\endcsname
      \csname r@prop:demand-functions@cref\endcsname
  }%
}
\makeatother

\newtheorem{assumption}[theorem]{\text{Assumption}}

\newcommand{\E}{\mathbb{E}}

\newcommand{\R}{\mathbb{R}}
\newcommand{\1}{\mathbbm{1}}
\newcommand{\F}{\mathcal{F}}

\newcommand{\Var}{\mathrm{Var}}

\newcommand{\LVR}{\mathrm{LVR}}
\newcommand{\paraheader}[1]{\smallskip\noindent{\sffamily\bfseries #1}}

\provideboolean{submissionversion}
\setboolean{submissionversion}{true}

\ifthenelse{\boolean{submissionversion}}{
  \renewcommand{\todo}[2][1]{}
}{
}

\title{Uniform-Loss Automated Market Making for Prediction Markets} %TODO Please add

\author{Ciamac C.~{Moallemi}}{Decision, Risk, and Operations Division, Graduate School of Business,
  Columbia University
  \and Paradigm
  \and Uniswap Labs}{}{}{CCM is supported by the Briger Family Digital
  Finance Lab at Columbia University. CCM is a research advisor for Paradigm and Uniswap Labs.}

\author{Dan {Robinson}}{Paradigm}{}{}{}

\author{Brian {Zhu}}{Department of Industrial Engineering and Operations Research, Columbia
  University}{}{}{}

\authorrunning{Ciamac C.~Moallemi, Dan Robinson, and Brian Zhu}

\Copyright{Ciamac C.~Moallemi, Dan Robinson, and Brian Zhu}

\ccsdesc[500]{Applied computing~Economics}

\keywords{Prediction markets, automated market makers, loss-versus-rebalancing, win-martingales, liquidity provision} %TODO mandatory; please add comma-separated list of keywords

\ArticleNo{0}
\begin{document}

\maketitle

%TODO mandatory: add short abstract of the document
\begin{abstract}
Automated market makers (AMMs) for prediction markets descend from market scoring rules, where a mechanism operator subsidizes a market to aggregate beliefs about uncertain events.
The existing literature has focused on bounding the total worst-case loss to the subsidizer, but has not addressed how that loss is distributed across price states or over time.
We use the framework of loss-versus-rebalancing (LVR) to study this distribution and introduce \textit{uniform AMMs}, defined by the property that instantaneous LVR is proportional to pool value and independent of the current token price.
In a static setting, we show that for a broad class of \textit{win-martingales} --- processes that converge to 0 or 1 at a fixed resolution time --- there exists a pricing function that achieves uniform LVR under that process, and conversely, that any sufficiently regular pricing function induces a win-martingale under which it is uniform.
We then extend the framework to dynamic liquidity management, showing that liquidity levels can be adjusted over time to implement a prescribed target expected cumulative loss schedule.
This theory is illustrated with canonical examples of win-martingales and pricing functions.
Our results can inform AMM designers and liquidity providers on how the inevitable cost of subsidizing price discovery can be shaped and controlled across both price and time.
\end{abstract}

\section{Introduction}

Prediction markets aggregate beliefs about uncertain future events through the trading of
contingent claims whose prices are naturally interpreted as probabilities
\citep{arrow_debreu_1954,wolfers2004prediction}. In a binary prediction market, two outcome
contracts --- commonly called YES and NO tokens --- pay one unit of numeraire depending on whether
a specified event occurs, so that their prices sum to one and the YES price represents the
market-implied probability of the event. Providing liquidity is essential for these markets to
function, yet many prediction markets concern long-tail or short-lived events for which continuous
quoting by traditional market makers is impractical. Market scoring rules, introduced by
\citet{Hanson03,hanson2007lmsr}, address this problem by having a mechanism operator maintain an
automated market maker (AMM) that stands ready to trade at prices determined by a scoring rule or
invariant. In this framework, the operator acts as a subsidizer of price discovery: by committing
capital to the AMM, the operator absorbs losses to informed traders in exchange for the
information revealed by trading activity. The equivalence between these cost-function
prediction-market mechanisms and the constant-function market makers that later emerged in
decentralized finance has been formalized by \citet{frongillo2024axiomatic}.

A central achievement of the market scoring rule literature has been to bound the total worst-case
loss to the subsidizer. In the logarithmic market scoring rule (LMSR), for example, the operator's
maximum loss is $b\ln n$ for $n$ outcomes with liquidity parameter $b$
\citep{hanson2007lmsr}. \citet{chen2007utility} extend this analysis to a broader class of
cost-function market makers, providing a utility-theoretic framework for bounded-loss
mechanisms. These bounds are valuable because they make the subsidy budget transparent and
finite. However, they treat the subsidy as a single aggregate number. They do not address how the
loss is distributed: whether it concentrates at extreme or moderate prices, whether it accumulates
early or late in the market's life, or whether some regions of the probability space are
effectively unsubsidized. The distribution of losses across price states and over time has not
been a focus of prediction market mechanism design.

Loss-versus-rebalancing (LVR), introduced by \citet{milionis2022automated}, provides the
analytical framework to study exactly this distribution. LVR formalizes the instantaneous
adverse-selection cost of an AMM position: it measures the rate at which value flows from the
liquidity pool to arbitrageurs who trade against stale quotes. In prediction markets, LVR has a
natural dual interpretation: it is both the cost borne by the liquidity provider and the
subsidy flowing to informed traders who move stale prices toward the current fair value. Because
LVR is an instantaneous and state-dependent quantity, it reveals how the subsidy is allocated
across price states at each moment in time, providing exactly the granularity that the
bounded-loss framework leaves unexplored.

Outcome contracts in prediction markets stand out from other assets as their prices are bounded
between zero and one, and at resolution they collapse to one of the two endpoints. Before
resolution, prices move as information arrives. Under a risk-neutral measure \citep{harrison1979martingales}, the fair price is therefore a probability-valued martingale: conditional on the current
information available to the market, the expected future price equals the current price. The
binary resolution condition further implies that the terminal value must lie in $\{0,1\}$. These
processes are referred to in the literature as \emph{win-martingales}. They provide a reduced-form
model of belief dynamics in binary prediction markets: the martingale property captures the
intertemporal consistency of prices or beliefs, while terminal absorption at zero or one captures
event resolution. Since volatility is the central driver of adverse selection, volatility
structure of the win-martingale determines where and when the subsidy is consumed, making it the
key input to AMM design.

This paper studies AMM design for binary prediction markets through the lens of \emph{uniform
  LVR}. An AMM is uniform with respect to a given price process if its instantaneous LVR is
proportional to pool value and independent of the current price state, after normalizing by the
rate of information arrival. Uniformity ensures that the subsidy is spread evenly across belief
states: no region of the probability space is over- or under-subsidized relative to pool
value. For the subsidizer, this makes the cost of the mechanism legible --- expected losses can be
expressed as a fraction of pool value rather than as a complicated function of the current
probability. For the mechanism designer, it provides a principled way to allocate liquidity across
belief states by matching the geometry of the AMM to the stochastic dynamics of the event
probability. In this sense, uniform LVR addresses the question of \textit{how} the bounded loss
should be distributed, complementing the classical question of \textit{how much} the total loss
is.

Beyond legibility, the uniform-LVR criterion has several normative justifications that we develop
in \cref{sec:static}. First, uniformity yields a tractable liquidity schedule for implementing any
prescribed target expected loss profile, enabling the dynamic results in
\cref{sec:dynamic}. Second, if liquidity providers can dynamically enter and exit, uniform LVR
ensures that the loss rate per dollar deployed is independent of the current price state,
eliminating adverse selection among LPs. Third, among designs with a given total expected loss,
uniform LVR minimizes the worst-case relative loss rate across price states, a minimax fairness
criterion. These properties distinguish uniform LVR from alternative loss distributions and
motivate it as a principled design criterion rather than merely a convenient simplification.

The key concept we introduce is that uniformity is not a property of an invariant alone, but rather a joint property of an AMM and a price process. We work with separable win-martingales, in which the diffusion coefficient factors into a state-dependent volatility profile and a time-dependent information-release rate. This separability distinguishes where information moves prices most strongly from when information arrives most quickly, and we study a range of volatility profiles and information arrival patterns with natural economic interpretations. On the AMM side, we represent an invariant through its pool-value function, the marked-to-market value of the arbitraged pool as a function of the fair price. Requiring LVR to be proportional to pool value across all prices yields a boundary value problem (BVP) that is the central object of the paper. We show that this BVP gives a bidirectional correspondence: for any separable win-martingale in our class, there exists a concave pool-value function that achieves uniform LVR, and conversely, any sufficiently regular pool-value function induces a win-martingale under which the corresponding AMM is uniform, linking AMM geometry with the underlying dynamics of a price process.

% On the AMM side, we represent an invariant through its \emph{pool-value function} $V(p)$:
% the marked-to-market value of the arbitrage-free AMM when the fair YES price is $p$. For a
% separable win-martingale, It\^o's formula implies that instantaneous LVR is
% \[
%     \LVR_t
%     =
%     -\frac{1}{2}V''(P_t)\frac{G(P_t)^2}{h(t)^2}.
% \]
% Thus the state dependence of LVR is governed by the product of the AMM's curvature
% $V''$ and the price process's local variance $G^2$. Requiring LVR to be proportional to
% pool value across all price states yields the boundary value problem
% \[
%     G(p)^2 V''(p)+\beta\cdot V(p)=0,
%     \qquad
%     V(0)=V(1)=0,
%     \qquad
%     V(p)>0 \quad \text{for } p\in(0,1).
% \]
% This equation is the central object of the paper. It links AMM geometry to belief dynamics:
% the volatility profile determines the curvature needed for uniform losses, and conversely,
% the curvature of an AMM value function determines the belief process under which it is
% uniform.

\paraheader{Contributions.}
Our work makes four contributions at the intersection of mechanism design for subsidized price discovery, stochastic process theory, and prediction market AMMs:
\begin{enumerate}[(i)]
    \item We provide a general framework for designing AMMs with uniform LVR across prices in binary prediction markets. We define uniformity as the requirement that LVR, as a fraction of pool value, be independent of the price after normalizing by informational time, and show that this condition is equivalent to a second-order BVP.
    \item We establish a bidirectional correspondence between win-martingale price processes and AMM invariants. Starting from a separable win-martingale, we show that a corresponding concave pool-value function can be constructed by solving the uniformity BVP. Conversely, starting from a sufficiently regular pool-value function, we construct a volatility profile under which the AMM has uniform LVR. Thus, a price process determines the AMM geometry needed to equalize losses, and an AMM invariant can be interpreted as embedding a process under which prices evolve.
    \item We provide a set of examples that make this correspondence explicit. For canonical
      win-martingales with natural applications, we derive the associated uniform AMMs where
      possible.\footnote{For example, the Wright-Fisher / Jacobi volatility profile leads to a
        constant-elasticity invariant while the logistic / log-odds diffusion corresponds to the constant product market maker.} We also take familiar AMMs, such as the constant product market maker and the logarithmic market scoring rule, and identify win-martingale dynamics under which they are uniform.
    \item We extend our framework to a dynamic setting. While (static) uniformity controls how LVR varies across price states, liquidity providers and protocol designers also care about \textit{when} losses are incurred. Once a uniform AMM value function has been fixed, we show that a deterministic liquidity schedule can implement a prescribed target expected cumulative loss schedule. The dynamic setting separates and solves two design problems: the invariant controls the statewise distribution of the subsidy, while the liquidity schedule controls its time profile.
\end{enumerate}

\paraheader{Related literature.}
This paper contributes to three strands of literature. The first is prediction market mechanism design. Market scoring rules, introduced by \citet{Hanson03,hanson2007lmsr}, provide the foundational mechanism for automated price discovery in prediction markets, with the LMSR as the canonical example. \citet{chen2007utility} develop a utility-theoretic framework for bounded-loss market makers, generalizing Hanson's construction. These mechanisms trade contingent claims whose prices can be interpreted as probabilities \citep{arrow_debreu_1954,wolfers2004prediction,wolfers2006interpreting}. Empirical work has documented the forecasting performance of prediction markets \citep{berg2008prediction}, while more recent work studies price formation and information incorporation in field prediction markets \citep{bossaerts2024price}. Closely related to our focus on stochastic models of probability dynamics, \citet{archak2009modeling} study volatility in prediction markets, and \citet{dalen2025blackscholes} proposes a logit jump-diffusion framework for prediction-market belief risk.

The second strand is automated market making in decentralized
finance. \citet{frongillo2024axiomatic} establish the equivalence between cost-function
prediction-market mechanisms and constant-function market makers. Subsequent work studies
invariant geometry, liquidity provision, and concentrated-liquidity designs
\citep{uniswapv3whitepaper2021,angeris2024geometry,capponi2025liquidity}. \citet{milionis2022automated}
formalize loss-versus-rebalancing as the adverse-selection cost of AMM liquidity.
%In an earlier unpublished version of this paper,
\citet{paradigm2024pmamm} apply this idea to prediction markets
and derive a uniform-LVR AMM for Gaussian score dynamics.
We generalize this construction by
characterizing uniform AMMs for broad classes of win-martingale price processes and by studying
the converse mapping from AMM invariants to compatible belief dynamics.\todo{mention this is an
  earlier version in non-anonymous version}

The third strand is martingale theory and belief dynamics. Martingales are fundamental to fair pricing under information filtrations \citep{doob1953stochastic}. In binary prediction markets, the relevant objects are bounded martingales that terminate at zero or one. Recent work on win-martingales studies entropy, activity, and optimal-transport properties of such processes \citep{backhoff2023mostexciting,backhoff2024exciting,backhoff2024wasserstein}. Our paper uses win-martingales as models of prediction-market prices and connects their volatility profiles to implementable AMM invariants.

%%% Local Variables:
%%% mode: LaTeX
%%% TeX-master: "main"
%%% End:

\section{Prediction Market Price Processes}\label{sec:processes}

% \todo[inline]{
% Setup:

% token $X$ resolves to zero or one depending on if an event happens by by time $T$

% definition: win-martingale

% want price process $P_t$ in $[0,1]$ for $t < T$

% want $P_T\in \{0,1\}$

% should be a martingale (we will work under risk neutral measure, zero interest rates)
% }

We model the evolution of fair prices in a binary prediction market. Consider a
contract that resolves at a fixed terminal time $T>0$ to one of two mutually exclusive outcomes. Let $X$ denote the token that pays one unit of numeraire if the event occurs and zero otherwise, and $Y$ denote the complementary token; we refer to the $X$ and $Y$ tokens as YES and NO tokens, respectively. Since exactly one of the two outcomes occurs, the terminal payoff from holding both tokens is one, so the fair prices of the two tokens must sum to one at all times. We henceforth focus on the YES-token's price without loss of generality.

Let $(\Omega,\mathcal{F},\{\mathcal{F}_t\}_{0\leq t\leq T},\mathbb{P})$ be a filtered probability space representing the information available to market participants over time. We denote the process for the fair price of token X (or the YES-token) by $\{P_t\}$. Assuming a risk-neutral environment with zero interest rates, the fair price process must be a martingale. Thus, for $s \leq t\leq T$, we have
% \[
$
    \mathbb{E}[P_t \mid \mathcal{F}_s] = P_s.
$
%\]
This condition captures the usual no-arbitrage interpretation of prediction market prices: conditional on the information currently available, the expected future probability of the event is equal to the current probability. Prediction market prices differ from ordinary asset prices because the terminal value is binary. At resolution, the uncertainty is fully revealed and the price must collapse to one of the two endpoints, zero or one. This motivates the following class of processes.

\begin{definition}[Win-martingale]
A process $\{P_t\}_{0\leq t\leq T}$ is a \underline{win-martingale} if it is a martingale taking values in $[0,1]$ and satisfies $P_T \in \{0,1\}$ almost surely.
\end{definition}

Win-martingales provide a natural reduced-form model of belief dynamics in prediction markets. The
martingale property encodes temporal consistency, while the terminal condition encodes resolution of the event. Throughout the paper, we focus on continuous price processes, so that beliefs evolve through the gradual arrival of information rather than through predictable jumps. In particular, we model $P$ as an It\^o diffusion without drift,
\[
    dP_t = \sigma(P_t,t)\,dW_t,
\]
where $W$ is a Brownian motion adapted to $\{\F_t\}$ and $\sigma$ is a state- and time-dependent volatility function. The absence of a drift term is forced by the martingale property, while the volatility function determines how strongly prices react to new information at different price levels and times before resolution.

\subsection{Separable Win-Martingales}

% \todo[inline]{
% (could be at end of prior section)

% want to work with cts price processes, hence diffusions

% m.g. property forces  $dP_t =  \sigma(P_t,t)\,dW_t$

% we will make a more restrictive assumption, separable across time and price

% all diffusive models that have been proposed in the literature satisfy this property
% }

To obtain a tractable yet flexible class of price processes, we impose a separable structure on the diffusion coefficient.
\begin{definition}[Separable win-martingale]
A win-martingale $\{P_t\}$ is \underline{separable} if it satisfies
\begin{equation}\label{eq:sde}
  dP_t = \frac{G(P_t)}{h(t)}\,dW_t
\end{equation}
for some $G:[0,1]\to\mathbb{R}_+$ and $h:[0,T)\to\mathbb{R}_+$.
\end{definition}
where $G$ determines the state dependence of volatility and $h$ determines the rate at which information  is released.

This specification separates two conceptually distinct features of belief dynamics. The function $G$ describes how sensitive the market price is to new information at a given price level. For example, a model with large $G(p)$ near $p=1/2$ and small $G(p)$ near the endpoints says that beliefs are most responsive when the outcome is uncertain and become increasingly sticky as the event becomes almost resolved. The function $h$, by contrast, controls the temporal pattern of information arrival. Smaller values of $h(t)$ correspond to faster information arrival and hence greater instantaneous volatility in calendar time. In this sense, $h(t)$ provides a flexible reduced-form way to encode different informational environments, such as steady information flow, late-breaking revelation, or accelerating news arrival near the event date.

Formally, this interpretation can be made precise via a time change. Define the \emph{informational time}
\[
    \tau(t) := \int_0^t \frac{1}{h(s)^2}\,ds,
\]
which measures the cumulative information released by calendar time $t$. The price process can then be written as a time change of the time-homogeneous diffusion
\[
    dQ_\tau = G(Q_\tau)\,dB_\tau, \qquad P_t = Q_{\tau(t)},
\]
where $B$ is a Brownian motion in informational time. In this representation, $G$ governs the diffusion under its natural information clock, while $h$ rescales calendar time to informational time. We impose the following regularity and boundary conditions.

\begin{assumption}
\label{ass:win-martingale}
The functions $G$ and $h$ satisfy:
\begin{enumerate}[(i)]
    \item $G$ is continuous on $[0,1]$, with $G(0)=G(1)=0$ and $G(p)>0$ for all
    $p\in(0,1)$;
    \item $G$ is locally Lipschitz on $(0,1)$;
    \item $\displaystyle\liminf_{p\searrow0}\frac{G(p)}{p}>0$ and $\displaystyle\liminf_{p\nearrow1}\frac{G(p)}{1-p}>0$;
    \item $\tau(t) = \displaystyle \int_0^t \frac{1}{h(s)^2}\,ds < \infty$ for all $t\in[0,T)$ and
      $\tau(T) = \displaystyle\int_0^T \frac{1}{h(s)^2}\,ds = \infty$.
\end{enumerate}
\end{assumption}

Conditions (i) and (ii) ensure that the interior diffusion is well posed: the endpoints are absorbing, volatility is strictly positive in the interior, and the stochastic differential equation has a unique solution up to the boundary. Condition (iii) imposes endpoint regularity; it is not needed for the process to be a win-martingale, but it will be used later to obtain regularity of the uniform AMM boundary value problem. Condition (iv) ensures that only finite informational time elapses over any strict subinterval of \([0,T)\), while infinite informational time elapses by the resolution date $T$, guaranteeing resolution of the process at time $T$. Since $P$ is a bounded local martingale, it is a true martingale so $P$ is a win-martingale; \cref{prop:separable-win-martingale} formalizes this result.

\begin{proposition}
\label{prop:separable-win-martingale}
Under Assumption~\ref{ass:win-martingale}, if $P$ satisfies \eqref{eq:sde} with
$P_0=p_0\in(0,1)$, then:
\begin{enumerate}[(i)]
    \item $(P_t)_{0\leq t\leq T}$ is a win-martingale.
    \item For every $t\in[0,T]$, the conditional variance of the terminal outcome satisfies
    \[
      \Var[P_T\mid\mathcal{F}_t]
      = P_t(1-P_t)
        =
        \mathbb{E}\!\left[
            \int_t^T \frac{G(P_s)^2}{h(s)^2}\,ds
            \,\middle|\,\mathcal{F}_t
        \right].
    \]
\end{enumerate}
\end{proposition}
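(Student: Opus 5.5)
The plan is to pass to informational time and analyze the time-homogeneous diffusion $dQ_\tau = G(Q_\tau)\,dB_\tau$, with $P_t = Q_{\tau(t)}$. By Assumption~\ref{ass:win-martingale}(iv), $\tau$ is a continuous increasing bijection from $[0,T)$ onto $[0,\infty)$, so statements about $P$ on $[0,T)$ are equivalent to statements about $Q$ on $[0,\infty)$. The limit $t\to T$ then corresponds to $\tau\to\infty$.

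\emph{Well-posedness and absorption.} First I will handle existence and uniqueness. By (i)--(ii), $G$ is locally Lipschitz on $(0,1)$, so there is a unique strong solution up to the exit time $\zeta$ of $(0,1)$. After $\zeta$ the process stays at the boundary point it hit, since $G(0)=G(1)=0$ makes the endpoints absorbing. The process is valued in $[0,1]$, and $P$ is a continuous local martingale with quadratic variation $\int_0^t G(P_s)^2h(s)^{-2}\,ds$. A bounded local martingale is a true martingale on $[0,T)$. By martingale convergence, $P_t\to P_{T-}$ a.s.\ and in $L^2$, and I will set $P_T:=P_{T-}$, which makes $P$ a martingale on $[0,T]$.

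\emph{Terminal values lie in $\{0,1\}$.} This is the main obstacle. Boundedness gives
\[
\E\!\left[\int_0^T \frac{G(P_s)^2}{h(s)^2}\,ds\right] = \E[\langle P\rangle_T] = \E[(P_T-p_0)^2] \le 1,
\]
hence $\int_0^\infty G(Q_u)^2\,du<\infty$ a.s. On the event $\{P_T\in(0,1)\}$, continuity of the paths gives that $Q_u$ eventually lies in a compact $[a,b]\subset(0,1)$. Since $G$ is continuous and positive there, $G^2\ge c>0$ on $[a,b]$, which forces the integral to diverge. So this event has probability zero. The same argument shows that $\langle P\rangle$ cannot converge while $P_T$ is interior, so no separate convergence result is needed. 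I do not need (iii) here, consistent with the remark after the assumption.

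\emph{Variance identity.} For $P_T\in\{0,1\}$ we have $P_T^2=P_T$. Using the martingale property,
\[
\Var[P_T\mid\F_t] = \E[P_T\mid\F_t] - P_t^2 = P_t(1-P_t).
\]
For the second equality, $P^2-\langle P\rangle$ is a martingale. It is a true martingale because $\E\langle P\rangle_T\le1$, so Burkholder--Davis--Gundy or a direct $L^2$-martingale argument applies. Therefore
\[
\E\!\left[\int_t^T \frac{G(P_s)^2}{h(s)^2}\,ds \,\middle|\, \F_t\right] = \E[P_T^2-P_t^2\mid\F_t] = P_t-P_t^2.
\]
To be careful about the endpoint $T$, I will first take $s<T$ and then pass $s\uparrow T$ using monotone convergence for the integral and $L^2$ convergence for $P_s^2$.
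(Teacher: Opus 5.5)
Your proof is correct and follows essentially the same route as the paper. You pass to informational time, use bounded-martingale convergence to define $P_T$, and rule out an interior limit because $G^2$ is bounded below on compacts of $(0,1)$ while $\tau(T)=\infty$; (ii) then follows from $P_T^2=P_T$ and the $P^2-\langle P\rangle$ (It\^o isometry) identity. The only real differences are that you contradict an interior limit via $\E\langle P\rangle_T\le 1$ (so the quadratic variation is a.s.\ finite) where the paper invokes Dambis--Dubins--Schwarz, and that you handle the limit $s\uparrow T$ in (ii) more carefully than the paper does.
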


% This separable class includes many natural models of prediction-market belief dynamics. The profile $G$ captures how volatility varies across probability states, while $h$ captures the market's information-release schedule. This separation will be crucial in the sequel: the static shape of a uniform AMM is determined by $G$, whereas the time profile of expected losses is governed by $h$ and by the liquidity schedule chosen by the market maker.

Part (ii) gives a calibration check: every admissible choice of \(G\) and \(h\) must allocate the remaining Bernoulli variance \(P_t(1-P_t)\) across future informational time.

% \todo[inline]{
% idea: start with a bounded m.g. on $[0,\infty)$, this must have a limiting distribution, we will require that limiting distribution be concentrated on $\{0,1\}$, this leads to conditions on $G$, rescale time so this convergence happens on $[0,T]$, this leads to conditions on $h$

% write out time scaling mapping $[0,+\infty] \rightarrow [0,T]$

% Proposition: if we make these assumptions, resulting price process is a win-martingale

% show variance / quadratic variation decomposition
% }

\subsection{Examples of Volatility Profiles}\label{sec:wm-example}

We now discuss several canonical choices of the volatility profile $G$ in the win-martingale dynamics, each of which encodes a different model of how beliefs evolve in a binary-outcome market. The function $G$ determines how sensitive prices are to incoming information at different probability levels, while $h$ governs the time profile of information release. Unless an information clock is displayed explicitly, these profiles should be combined with any clock \(h\) satisfying Assumption~\ref{ass:win-martingale}(iv). These examples illustrate the range of belief dynamics that can arise in prediction markets, along with their economic interpretation and potential applications.

\vspace{1em}

\paraheader{Absorbed Brownian motion.}
This is the simplest benchmark process, corresponding formally to Brownian motion on $(0,1)$ with absorption at the endpoints, is
\[
G(p)=\mathbbm{1}_{\{0<p<1\}}.
\]
The instantaneous volatility of beliefs is constant across interior price states: information moves the market by the same magnitude regardless of the contract's current trading price.

\vspace{1em}

\paraheader{Wright-Fisher / Jacobi diffusion.}
The volatility profile
\[
G(p)=\sqrt{p(1-p)}
\]
is the classical Wright-Fisher or Jacobi diffusion coefficient \citep{ethier1986markov}. This profile, with natural applications in population genetics, results from the continuous-time version of an allele frequency process over successive generations, see \cref{app:supp-price-processes} for the diffusion-limit derivation.

\vspace{1em}

\paraheader{Logistic / log-odds diffusion.}
The volatility profile
\[
    G(p)=p(1-p)
\]
corresponds to a price process whose randomness is additive in log-odds space. To see this, define the log-odds process
\[
    \mathcal{O}_t := \log\frac{P_t}{1-P_t}.
\]
Under this profile, the log-odds process has constant instantaneous volatility up to the information clock \(h(t)\). This specification is natural when evidence is approximately additive in log-likelihood-ratio space, as in repeated Bayesian updating from conditionally independent signals; see \cref{app:supp-price-processes}.

% \vspace{1em}

% \paraheader{General Beta-type family.}
% The parametric family of profiles given by
% \[
% G(p)=p^a(1-p)^b
% \]
% for $a,b>0$ generalizes the previous cases and provides a flexible way to control boundary behavior asymmetrically. The parameters $a$ and $b$ determine how quickly volatility vanishes near $0$ and $1$, respectively. When $a=b=\tfrac12$, one recovers the Wright-Fisher diffusion; when $a=b=1$, one obtains the logistic diffusion. This family allows one to encode asymmetric informational environments: for example, if bad-news realizations are typically sharper or more decisive than good-news realizations, one may choose $a\geq b$ to capture the differential volatility decay near the two endpoints. 

% This can be useful for contracts where one side of the event is easier to verify or more exposed to sudden revelation, such as default/no-default markets, litigation outcomes, regulatory approvals, or prediction contracts with strongly asymmetric upside and downside information structures. From the AMM-design perspective, this family offers a rich testbed for studying how invariant curvature should respond to different kinds of boundary asymmetry.

\vspace{1em}

\paraheader{Gaussian score dynamics.}
Gaussian score dynamics \citep{archak2009modeling,paradigm2024pmamm} can arise when the event outcome is determined by the sign of an underlying continuous fundamental process $\{Z_t\}$, such as a latent score differential, with common examples including the difference in score between two teams in a sporting event or in votes between two candidates in a political election. If $\{Z_t\}$ evolves as an arithmetic Brownian motion with zero drift and volatility $\sigma$, and the YES token pays off when the terminal fundamental is nonnegative, the induced price process has
\[
    G(p)=\phi(\Phi^{-1}(p)),
    \qquad
    h(t)=\sqrt{T-t},
\]
regardless of $\sigma$. Here \(\Phi\) and \(\phi\) denote the standard normal distribution function and density, respectively. A short derivation is given in \cref{app:supp-price-processes}.

% This model is a natural reduced form for markets in which the binary outcome is induced by an underlying continuous quantity crossing a threshold. In sports markets, $Z_t$ can be interpreted as a latent score or performance differential; in election markets, it can represent a latent vote-share margin; and in macroeconomic or policy markets, it can represent a continuous fundamental whose sign determines the event. The Gaussian specification is especially tractable because posterior probabilities are obtained by passing the normalized latent state through the probit link, and the resulting price volatility is highest near $p=1/2$ and vanishes near the endpoints.

\vspace{1em}

\paraheader{The most exciting game.} \citet{backhoff2023mostexciting} formally show that the win-martingale that maximizes an information-theoretic notion of uncertainty among all win-martingales as defined earlier in this section is 
\[
    dP_t
    =
    \frac{\sin(\pi P_t)}{\pi\sqrt{T-t}}\,dW_t.
\]
In the notation of our separable win-martingale specification, we have
\[
    G(p)=\frac{\sin(\pi p)}{\pi}, \qquad h(t)=\sqrt{T-t}.
\]
\citet{backhoff2023mostexciting} refer to this process as ``the most exciting game.''

\subsection{Examples of Information Release Rates}

We now discuss several choices for the information clock $h$. Whereas the volatility profile $G$ determines how sensitive beliefs are to information at different price levels, $h$ shapes the accumulation of information and therefore controls whether price discovery is smooth or concentrated near resolution, for example. Various choices of $h$ can induce different term structures of volatility and adverse selection while leaving the state dependence of belief updates unchanged. 

\vspace{1em}

\paraheader{Square-root clock.}
The schedule
\[
h(t)=\sqrt{T-t}
\]
releases information at a rate proportional to the inverse square root of remaining time to
resolution. This can be thought of as a canonical clock as it appears as the schedule for Gaussian
score dynamics and the most exciting game in \citet{backhoff2023mostexciting}. Another useful
property of this clock is that the remaining quadratic variation of the process is $(T-t)$, which
is linear in the time to resolution.\todo{doesn't make sense to me, QV should also depend on $G$ --Ciamac}

\vspace{1em}

\paraheader{Brody-Hughston-Macrina (BHM) clock.}
The Brody-Hughston-Macrina clock is given by
\[
    h(t)=\frac{T-t}{\sigma T},
\]
for $\sigma>0$. This clock arises naturally in the information-based asset-pricing framework of
\citet{brody2008information}. For a binary terminal payoff, the induced posterior process satisfies a separable diffusion of the logistic form
\[
    dP_t
    =
    \frac{\sigma T}{T-t} \cdot P_t(1-P_t)\,dW_t.
\]
Equivalently, in our notation, we ahve
\[
    G(p)=p(1-p),
    \qquad
    h(t)=\frac{T-t}{\sigma T}.
\]
The BHM clock therefore gives a structural microfoundation for the logistic/log-odds diffusion described above; details are in \cref{app:supp-price-processes}.

\vspace{1em}

\paraheader{General power-law family.}
A convenient general family is
\[
h(t)=c(T-t)^\alpha,
\]
where $c>0$ controls the overall speed of information arrival and $\alpha\geq1/2$ governs its temporal shape. This family provides a parsimonious way to interpolate across different informational environments. For example, the square root clock is encoded by $c=1$ and $\alpha=1/2$, and the Brody-Hughston-Macrina clock is encoded by $c=1/(\sigma T)$ and $\alpha=1$.
The lower bound on \(\alpha\) is sharp for terminal resolution: since
\[
  \tau(t) = \int_0^t h(s)^{-2}\,ds
    =
    c^{-2}\int_0^t (T-s)^{-2\alpha}\,ds,
\]
the accumulated informational time diverges as \(t\uparrow T\) if and only if \(2\alpha\ge1\).

%%% Local Variables:
%%% mode: LaTeX
%%% TeX-master: "main"
%%% End:

\section{Automated Market Makers}\label{sec:amms}

We now introduce an automated market maker (AMM) model used throughout the paper. We consider a binary prediction market with two outcome tokens, denoted by $X$ and $Y$, as detailed in \cref{sec:processes}. Notably, rather than exchanging outcome tokens against a numéraire, the AMM exchanges the two outcome tokens against each other. A trader who wants to buy exposure to
the event trades into the $X$ token and out of the $Y$ token, while a trader who wants to sell exposure does the reverse.

Liquidity provision and trading are governed by a reserve-based invariant. Let $x$ and $y$ denote the AMM's reserves of the two outcome tokens. We represent the AMM by a pricing function $F:\mathbb{R}_+^2\times \mathbb{R}_+ \to \mathbb{R}$ where the first two arguments are the token reserves and the third argument $L>0$ is a liquidity scale or pool-depth parameter. The feasible
reserve states for liquidity level $L$ are the points satisfying
\[
    F(x,y;L)=0.
\]
A trade moving reserves from $(x,y)$ to $(x+\Delta_X,y+\Delta_Y)$ is feasible if it preserves the invariant
\[
    F(x+\Delta_X,y+\Delta_Y;L)=F(x,y;L)=0.
\]
The signs of $\Delta_X$ and $\Delta_Y$ depend on the direction of the trade. For example, if the trader buys $X$ and pays in $Y$, then the pool's $X$ reserve decreases and its $Y$ reserve increases.

Given an external fair price $p$ for token $X$, arbitrageurs move the AMM to the point on the invariant curve that is least valuable at prices $(p,1-p)$, thus extracting as much value from the pool as possible. This gives the pool-value problem \citep{milionis2022automated}
\[
    V_L(p)
    =
    \inf_{(x,y)\in\mathbb{R}_+^2}
    \left\{
        px+(1-p)y:
        F(x,y;L)=0
    \right\}.
\]
When the liquidity scale is fixed, we write $V(p)$ for $V_L(p)$. The function $V$ is the
marked-to-market value of the arbitraged AMM position as a function of the fair probability
$p$. This representation is convenient because loss-versus-rebalancing and the uniformity
condition depend on the curvature of $V$. We impose the following regularity conditions on the invariant.

\begin{assumption}
\label{ass:pricing-function}
$F$ is twice continuously differentiable on $\mathbb{R}_{++}^2$ and
satisfies:
\begin{enumerate}[(i)]
    \item $F_x,F_y>0$ and $F_y^2F_{xx}-2F_xF_yF_{xy}+F_x^2F_{yy}<0$;
    \item $\lim_{x\downarrow 0}\frac{F_x}{F_y}=\infty$, $\lim_{x\uparrow \infty}\frac{F_x}{F_y}=0$, $\lim_{y\downarrow 0}\frac{F_x}{F_y}=0$ and $\lim_{y\uparrow \infty}\frac{F_x}{F_y}=\infty$;
    \item $\inf_{(x,y):F(x,y,L)=0}y=0$ and $\inf_{(x,y):F(x,y,L)=0}x=0$.
\end{enumerate}
\end{assumption}

Condition (i) gives monotonicity and the curvature needed for a well-defined marginal price. Conditions (ii) and (iii) are Inada-type boundary conditions ensuring all marginal prices \(p\in(0,1)\) are supported. Under these conditions, \cref{prop:pool-value-well-posed-main} gives a unique arbitraged pool value.

\begin{proposition}[Well-posed pool value problem]
\label{prop:pool-value-well-posed-main}
Suppose Assumption~\ref{ass:pricing-function} holds. Given a fixed liquidity level $L>0$, for every $p\in(0,1)$, the pool-value problem has a unique solution. Moreover, $V_L$ is differentiable and concave on $(0,1)$, and extends continuously to $[0,1]$ with $V_L(0)=V_L(1)=0$.
\end{proposition}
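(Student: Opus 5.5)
Write the level set as $\mathcal{C}_L=\{(x,y)\in\R_{++}^2: F(x,y;L)=0\}$. The plan is to show that $\mathcal{C}_L$ is the graph of a strictly convex, strictly decreasing function. Then the pool-value problem becomes a one-dimensional strictly convex minimization, and $V_L$ is an infimum of linear functions of $p$.

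\emph{Step 1 (curve structure).} Since $F_y>0$, the implicit function theorem makes $\mathcal{C}_L$ locally a $C^2$ graph $y=\psi(x)$ with $\psi'=-F_x/F_y<0$. Differentiating once more gives
\[
\psi''=-\frac{F_y^2F_{xx}-2F_xF_yF_{xy}+F_x^2F_{yy}}{F_y^3},
\]
which is $>0$ by Assumption~\ref{ass:pricing-function}(i). Monotonicity in $y$ shows that each vertical line meets $\mathcal{C}_L$ at most once. Hence $\mathcal{C}_L$ is the graph of $\psi$ on an open interval $(a,b)$, and I will take it to be connected, as is implicit in the setup. By (iii), together with monotonicity of $\psi$, we have $a=0$ and $\psi(x)\to 0$ as $x\to b$. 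By (ii), the slope $-\psi'$ is a continuous, strictly decreasing bijection from $(0,b)$ onto $(0,\infty)$. In particular $b=\infty$: otherwise $\psi$ would reach $0$ at finite $x$ with finite slope there, contradicting $F_x/F_y\to 0$ only in the appropriate limits. This case bookkeeping is the step I expect to be fiddly.

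\emph{Step 2 (existence and uniqueness).} For $p\in(0,1)$, minimize $g(x)=px+(1-p)\psi(x)$ over $x\in(0,\infty)$. The function $g$ is strictly convex, with $g'(x)=p+(1-p)\psi'(x)$. The first-order condition is $-\psi'(x)=p/(1-p)$, and it has exactly one root $x^*(p)$ by the bijection from Step 1. Because $g$ is strictly convex, this critical point is the unique global minimizer on the open curve. It remains to check that points on the boundary of the closure (the axes) do not attain a lower value. As $x\to 0$ we have $g'\to-\infty$, and as $x\to\infty$ we have $g'\to p>0$. So $g$ blows up or increases toward both ends, and the infimum is attained only at the interior point. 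The inverse function theorem applied to $\psi'$ (using $\psi''>0$) shows that $x^*(p)$ is $C^1$.

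\emph{Step 3 (regularity and boundary values).} $V_L(p)=\inf_{(x,y)\in\mathcal{C}_L}\{p x+(1-p)y\}$ is a pointwise infimum of affine functions of $p$, so it is concave. Since the minimizer is unique and $C^1$, the envelope theorem (Danskin) gives $V_L'(p)=x^*(p)-\psi(x^*(p))$, so $V_L$ is differentiable. For the boundary behavior:
\begin{itemize}
\item As $p\to 0$: $V_L(p)\le p x+(1-p)\psi(x)$ for every fixed $x$. Letting $p\to 0$ first and then $x\to\infty$ gives $\limsup V_L\le \inf_x\psi(x)=0$.
\item As $p\to 1$: the symmetric argument with $x\to 0$ gives $\limsup V_L\le 0$.
\item $V_L\ge 0$, since the reserves are nonnegative.
\end{itemize}
Hence $V_L$ extends continuously with $V_L(0)=V_L(1)=0$. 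This extension coincides with the defining infimum at the endpoints, because $\inf y=\inf x=0$ by (iii).
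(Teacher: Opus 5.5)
Your route is the paper's: the implicit function theorem plus the curvature condition in Assumption~\ref{ass:pricing-function}(i) make the level set the graph of a strictly convex decreasing $\psi$. You then minimize the strictly convex $px+(1-p)\psi(x)$ at the unique root of the first-order condition. Concavity comes from the infimum of affine maps, $C^1$ regularity from the implicit function theorem, and the endpoint values from (iii). Your endpoint argument, the $\limsup$ bound together with $V_L\ge 0$, is more careful than the paper's, which simply asserts the limits.

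One step is false: the claim $b=\infty$. Your justification does not work, because nothing in Assumption~\ref{ass:pricing-function} prevents $\psi$ from reaching $0$ at a finite $x$ with slope tending to $0$. The constant-elasticity invariant $\sqrt{x}+\sqrt{y}=\sqrt{L}$, which the paper itself uses, satisfies (i)--(iii). Yet its level set is the graph of $\psi(x)=(\sqrt{L}-\sqrt{x})^2$ over $x\in(0,L)$ only. The paper's proof makes the same slip when it writes $\Gamma_L$ as a graph over $(0,\infty)$.

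You never need $b=\infty$. You already showed that $-\psi'$ maps $(0,b)$ bijectively onto $(0,\infty)$, and $-\psi'(x)\to 0$ as $x\uparrow b$ follows from the $y\downarrow 0$ part of (ii), since $\psi(x)\to 0$. So minimize over $(0,b)$ and use $g'(x)\to p>0$ as $x\uparrow b$. In Step~3, let $x\uparrow b$ instead of $x\to\infty$. With that change the proof is complete.

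Connectedness also need not be assumed. Since $F_x,F_y>0$, if $x_1<x_2$ and $\psi(x_1)\le\psi(x_2)$, then $0=F(x_2,\psi(x_2))\ge F(x_2,\psi(x_1))>F(x_1,\psi(x_1))=0$, a contradiction. So $\psi$ is strictly decreasing on its whole open domain. At a gap endpoint $c$, the left limit of $\psi$ would then be finite and positive, and continuity of $F$ would put that limit point on the level set, which is impossible.
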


A direct consequence is that the pool's terminal value vanishes: since \(P_T\in\{0,1\}\), we have \(V_L(P_T)=0\) almost surely.

\cref{prop:demand-functions} records how to recover demand functions and an invariant representation from the pool-value function.

\begin{proposition}[Demand and invariant functions from the pool-value function]
\label{prop:demand-functions}
Suppose Assumption~\ref{ass:pricing-function} holds, and let $(X(p),Y(p))$ be the unique solution to the pool-value problem at $p\in(0,1)$. Then:
\begin{enumerate}[(i)]
    \item The demand functions are given by
    \[
        X(p)=V(p)+(1-p)\cdot V'(p),
        \qquad
        Y(p)=V(p)-p\cdot V'(p).
    \]
    \item A reconstructed invariant with the same pool-value function is given by
    \[
    F(x,y,L)=\inf_{p\in(0,1)} \frac{px+(1-p)y}{V(p)}-L.
    \]
\end{enumerate}

\end{proposition}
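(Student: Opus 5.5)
Part (i) is an envelope/first-order-condition argument, and part (ii) is convex duality. In both parts we take as given \cref{prop:pool-value-well-posed-main}: a unique minimizer $(X(p),Y(p))$ exists for each $p\in(0,1)$, and $V$ is differentiable and concave with $V(0)=V(1)=0$.

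\emph{Part (i).} Since $V(p)=\inf_{(x,y)}\{px+(1-p)y : F(x,y;L)=0\}$ is an infimum of affine functions of $p$ with a unique minimizer, Danskin's theorem gives $V'(p)=X(p)-Y(p)$. Differentiability of $V$ is already known, and uniqueness of the minimizer identifies the derivative. Combined with $V(p)=pX(p)+(1-p)Y(p)$, this is a $2\times 2$ linear system. Solving it:
\[
V+(1-p)V' = pX+(1-p)Y+(1-p)(X-Y)=X,
\]
and similarly $V-pV'=Y$. An alternative route uses the Lagrange condition: interiority follows from the Inada conditions (ii)--(iii), which give $F_x/F_y=p/(1-p)$. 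Then, wherever the minimizer is differentiable, $pX'+(1-p)Y'=0$ by tangency, so $V'=X-Y$. The Danskin route is cleaner because it avoids assuming smoothness of $X$ and $Y$. The main subtlety is justifying Danskin's theorem on the noncompact feasible set. I would restrict $p$ to a compact subinterval and show the minimizers stay in a bounded region of the curve. This follows from strict quasi-convexity of the level set (condition (i)) together with continuity of the unique minimizer in $p$, which is itself a standard Berge maximum-theorem argument.

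\emph{Part (ii).} The set $S=\{(x,y)\in\R_+^2: F(x,y;L)\ge 0\}$ is closed and convex: condition (i) says the level curve is strictly convex as a graph $y(x)$, and $F_x,F_y>0$ says $S$ is the upper region. By construction $V(p)$ is the support function of $S$ in direction $(p,1-p)$, i.e.\ $\inf_{(x,y)\in S}\{px+(1-p)y\}=V(p)$; the infimum over $S$ equals that over the curve by monotonicity. By the separating hyperplane theorem, a closed convex upward set equals the intersection of its supporting half-spaces. Hence
\[
(x,y)\in S \iff px+(1-p)y\ge V(p)\ \text{for all } p\in(0,1),
\]
where directions $p\in\{0,1\}$ are handled by positivity of $x,y$ and $V(0)=V(1)=0$. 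Dividing by $V(p)>0$, this is equivalent to $\inf_p \frac{px+(1-p)y}{V(p)}\ge 1$.

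To match the stated normalization, I would treat $L$ as a homogeneous scale, so that $V_L=L\,V_1$; then the condition reads $\inf_p \frac{px+(1-p)y}{V_1(p)}\ge L$. If homogeneity is not assumed, I interpret the reconstructed $F$ with $V=V_L$ and the threshold set accordingly. In either case the zero set of the reconstructed $F$ is the boundary of $S$, which is the original curve. The pool-value function is therefore the same, since $V$ depends only on the curve.

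The main obstacle is this last normalization point, together with showing that $V(p)>0$ on $(0,1)$. Positivity holds because the minimizer is interior with $x,y>0$ by condition (iii) and the Inada conditions.
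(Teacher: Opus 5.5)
Your proposal is correct. Part (i) matches the paper's argument: the envelope theorem gives $V'(p)=X(p)-Y(p)$, and you then solve the $2\times 2$ system. The Danskin/compactness detour is not needed. Once $V$ is known to be differentiable, the affine function $q\mapsto qX(p)+(1-q)Y(p)$ lies above $V$ (because $(X(p),Y(p))$ is feasible) and touches it at $q=p$, which forces $V'(p)=X(p)-Y(p)$.

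Part (ii) is where you take a genuinely different route. The paper never uses the original curve; for a given concave $V$ it checks two inequalities directly. By the tangent-line inequality $V(q)\le V(p)+V'(p)(q-p)$, the point $L\bigl(V(p)+(1-p)V'(p),\,V(p)-pV'(p)\bigr)$ satisfies $qx+(1-q)y\ge LV(q)$ for all $q$, with equality at $q=p$. So it lies on the reconstructed level set and attains $LV(p)$. Conversely, every point of that level set satisfies $px+(1-p)y\ge LV(p)$, by taking $q=p$ in the infimum. This argument is elementary and works for any concave $V$ positive on $(0,1)$ with $V(0)=V(1)=0$, which is what the paper needs to turn BVP solutions into invariants. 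It only identifies pool-value functions, however.

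Your support-function/separating-hyperplane argument proves more: the reconstructed upper level set is $\bar S$ (use the closure, since $F$ is only controlled on $\R_{++}^2$), so its zero set is the original curve. It also does not need differentiability of $V$. The cost is that it relies on convexity of $S$ inherited from the original $F$. For a bare $V$ you would have to produce the supporting points yourself, and that is exactly the paper's tangent-line step.

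Two small points to tighten. First, the claim that the zero set is $\partial S$ needs a line. Set $g(x,y):=\inf_p\frac{px+(1-p)y}{V(p)}$, which is concave, finite and positively homogeneous of degree one on $\R_{++}^2$. Scaling an interior point by $1-\varepsilon$ shows $g>1$ on the interior of $S$. Continuity of $g$ then gives $g=1$ on $\partial S\cap\R_{++}^2$. Second, the same homogeneity removes your normalization worry: $\{g=L\}=L\{g=1\}$ automatically has pool value $LV$, with no homogeneity assumed of the original $F$. Also, $V(p)>0$ follows simply from $X(p),Y(p)>0$ and $p\in(0,1)$; condition (iii) is what yields $V(0)=V(1)=0$.
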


\subsection{Examples of Automated Market Makers}\label{sec:amm-example}

To connect our framework to familiar market designs, we briefly review two canonical AMMs that have played a central role in both decentralized finance and prediction markets. Each can be represented equivalently by an invariant, a value function, and reserve demand functions.

\vspace{1em}

\paraheader{Constant product market maker (CPMM).}
The CPMM is defined by the invariant
\[
F(x,y)=xy=L ^2,
\]
where \(x\) and \(y\) denote the reserves of the two outcome tokens and \(L >0\) is a liquidity scale parameter. Solving the pool-value problem yields pool-value and demand functions
\[
V(p)=2L \sqrt{p(1-p)},
\qquad
X(p)=L \sqrt{\frac{1-p}{p}},
\qquad
Y(p)=L \sqrt{\frac{p}{1-p}}.
\]
Note that value here is concentrated near the center of the price space and vanishes at the endpoints, while being linear in the amount of liquidity $L$ deployed to the AMM.
% Thus the constant-product AMM concentrates value near the center of the state space and thins out toward the boundaries, reflecting the fact that liquidity is deepest when the market is most uncertain. Differentiating gives
% \[
% V''(p)=-\frac{L }{2\,[p(1-p)]^{3/2}},
% \]
% so its curvature increases sharply near \(0\) and \(1\), implying high price impact and large local adverse-selection sensitivity in those regions. In our framework, this invariant corresponds formally to the volatility profile
% \[
% G(p)=p(1-p),
% \]
% that is, the logistic or log-odds diffusion. Accordingly, the constant-product AMM may be interpreted as the canonical invariant matched to a market in which information arrives additively in log-odds space.

\vspace{1em}

\paraheader{Logarithmic market scoring rule (LMSR).}
The LMSR has an invariant representation of
\[
F(x,y;L)=1-e^{-x/L }-e^{-y/L }=0,
\]
where \(L >0\) controls the scale of liquidity. This is the invariant-based analogue of Hanson’s cost-function market maker \citep{hanson2007lmsr} and is an important benchmark in prediction-market design. The corresponding value and demand functions are
\[
V(p)
=
L  H(p),
\qquad
X(p)=-L \log p,
\qquad
Y(p)=-L \log(1-p),
\]
where $H(p)=-(p\log p+(1-p)\log(1-p))$ is the entropy function. Like the constant product market
maker, pool value is highest for intermediate price ranges, vanishes towards the endpoints, and is
linear in the liquidity scale. Observing that the entropy function is maximized at
$p=\sfrac{1}{2}$, we have that $V(p) \leq L H(\sfrac{1}{2}) = L \log 2$, which recovers the bound
of \citet{hanson2007lmsr} on the total pool loss.

% Unlike constant product, LMSR maintains strictly positive reserves throughout the interior and exhibits a more gradual decay of liquidity near the boundaries. Since
% \[
% V''(p)=-L \left(\frac{1}{p}+\frac{1}{1-p}\right)
% =
% -\frac{L }{p(1-p)},
% \]
% its curvature still diverges at the endpoints, but less sharply than in the constant-product case. In our framework, the associated volatility profile is
% \[
% G(p)=\sqrt{p(1-p)\,H(p)},
% \qquad
% H(p):=-p\log p-(1-p)\log(1-p),
% \]
% so LMSR is naturally matched to a process in which local volatility is proportional not only to posterior variance \(p(1-p)\), but also to the square root of entropy. This highlights the close connection between LMSR and information-theoretic notions of uncertainty.

\subsection{Loss-Versus-Rebalancing}

We next quantify the cost borne by liquidity providers when the AMM is arbitraged back to the current fair price. Because AMM quotes are functions of pool reserves, they update only when trades occur. If the external fair price moves before the pool is traded against, then informed traders can profit by moving the AMM from its stale reserve allocation to the allocation consistent with the new price. The resulting transfer from the pool to arbitrageurs is called loss-versus-rebalancing (LVR).

In our prediction-market setting, this cost also has a natural positive interpretation. The fair price process $P_t$ represents the information available to traders who observe the event-relevant signal. When the AMM price is stale, arbitrageurs who trade against the pool move its quote toward the observable fair price. Thus LVR can be viewed either as an adverse-selection loss to LPs or as the subsidy paid by the AMM to informed traders for correcting prices.

Consider a separable win-martingale with dynamics
\[
    dP_t = \frac{G(P_t)}{h(t)}\,dW_t .
\]
and an AMM with pool-value function $V$. With slight abuse of notation, denote the arbitraged mark-to-market value of the pool at time $t$ as
\[
    V_t := V(P_t).
\]
Applying Itō's formula to $V(P_t)$ decomposes the pool value dynamics into a martingale component and a predictable drift; the drift is nonpositive by concavity of $V$, and LVR is its absolute value.

\begin{proposition}[Loss-versus-rebalancing decomposition]
\label{prop:lvr-decomposition}
Let $\{P_t\}$ be a separable win-martingale with volatility profile $G$ and information release rate $h$, and let $V$ be a twice-differentiable, concave pool-value function of an AMM. Define the \underline{instantaneous LVR rate}
\[
    \LVR_t \coloneqq -\frac{1}{2}\cdot V''(P_t)\cdot\frac{G(P_t)^2}{h(t)^2}\,\geq\,0.
\]
Then:
\begin{enumerate}[(i)]
    \item The pool value $V_t = V(P_t)$ satisfies
    \[
        dV_t
        =
        V'(P_t)\,\frac{G(P_t)}{h(t)}\,dW_t
        - \LVR_t\,dt.
    \]
    \item\label{en:v-lvr} The expected loss of the pool by time $t$ is given by
    \[
        V(P_0) - \E\!\left[V(P_t)\right]
        =
        \E\!\left[\int_0^t \LVR_s\,ds\right].
    \]
\end{enumerate}
\end{proposition}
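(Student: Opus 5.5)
Part (i) follows from It\^o's formula applied to $V(P_t)$, and part (ii) from taking expectations after localizing the stochastic integral. Since $V$ is only assumed twice differentiable and concave, I will assume $V\in C^2(0,1)$, as is implicit in the definition of $\LVR_t$. I will work on $[0,t]$ with $t<T$, where $\tau(t)<\infty$ by Assumption~\ref{ass:win-martingale}(iv).

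For part (i), the process $P$ is a continuous semimartingale with $d\langle P\rangle_s = G(P_s)^2/h(s)^2\,ds$. While $P_s\in(0,1)$, It\^o's formula gives
\[
dV(P_s)=V'(P_s)\,\frac{G(P_s)}{h(s)}\,dW_s+\tfrac12 V''(P_s)\,\frac{G(P_s)^2}{h(s)^2}\,ds,
\]
and the $ds$ term is exactly $-\LVR_s\,ds$. Nonnegativity of $\LVR_s$ is immediate from concavity, since $V''\le 0$.

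The one subtlety is that $V$ need not be $C^2$ at the endpoints. For example, $V'$ blows up at $0$ and $1$ for CPMM and LMSR. To handle this, I introduce the exit times
\[
\sigma_n=\inf\{s:P_s\notin[1/n,1-1/n]\}.
\]
On $[1/n,1-1/n]$ the function $V$ is $C^2$ with bounded derivatives, so It\^o's formula applies rigorously to $V(P_{s\wedge\sigma_n})$. After absorption at $0$ or $1$, we have $G(P_s)=0$, $P$ is constant, and $V(P_s)=0$; both sides of the SDE then vanish, provided we adopt the convention that the integrands are $0$ there. With this convention the identity in (i) holds on $[0,\sigma)$, where $\sigma=\lim_n\sigma_n$ is the absorption time, and trivially after $\sigma$.

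For part (ii), I stop at $t\wedge\sigma_n$. The integrand $V'(P_s)G(P_s)/h(s)$ is bounded on $[0,t\wedge\sigma_n]$, because $V'$ and $G$ are bounded on $[1/n,1-1/n]$ and $1/h^2$ is integrable on $[0,t]$. Hence the stochastic integral is a true martingale with zero mean, and
\[
V(p_0)-\E[V(P_{t\wedge\sigma_n})]=\E\!\left[\int_0^{t\wedge\sigma_n}\LVR_s\,ds\right].
\]
Now let $n\to\infty$. On the left, $V$ is continuous and bounded on $[0,1]$ by Proposition~\ref{prop:pool-value-well-posed-main}, and $P_{t\wedge\sigma_n}\to P_{t\wedge\sigma}$. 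Bounded convergence then gives $\E[V(P_{t\wedge\sigma})]=\E[V(P_t)]$, using that $V(0)=V(1)=0$ and that $P$ is absorbed. On the right, $\LVR_s\ge 0$, so monotone convergence yields $\E\int_0^{t\wedge\sigma}\LVR_s\,ds$. Since $\LVR_s=0$ after absorption, this equals $\E\int_0^t\LVR_s\,ds$.

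The step I expect to be the main obstacle is this localization and boundary behavior. Without stopping, $V'(P_s)G(P_s)$ need not be square-integrable near the endpoints, so the stochastic integral need not be a true martingale. Monotone convergence, via the sign of $\LVR$, is what avoids any integrability assumption on $\LVR$ itself.
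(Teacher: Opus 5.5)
Your proposal is correct and follows essentially the same route as the paper: It\^o's formula for (i), then localization of the stochastic integral, with bounded convergence on the pool-value term and monotone convergence on the nonnegative $\LVR$ term for (ii). Your exit times $\sigma_n$ from $[1/n,1-1/n]$ are a concrete and more careful instance of the paper's generic localizing sequence, and they also handle the endpoint blow-up of $V'$ that the paper glosses over; the one thing left implicit is the case $t=T$ (used right afterward to get $V(p_0)=\E\int_0^T\LVR_t\,dt$), which follows from your $t<T$ identity by letting $t\uparrow T$ with the same two convergence theorems.
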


The drift term \(-\LVR_t\) is the predictable loss from continuous arbitrage to the fair
price. Since \(P_T\in\{0,1\}\) and \(V(0)=V(1)=0\), overall terminal expected loss is
\[
  V(p_0)=\E\left[\int_0^T \LVR_t\,dt\right].
\]

%%% Local Variables:
%%% mode: LaTeX
%%% TeX-master: "main"
%%% End:

\section{Static PM-AMMs: Uniform LVR Over Price States}\label{sec:static}

We now introduce the notion of uniform loss-versus-rebalancing. Given a separable win-martingale and an AMM with pool-value function $V$, the
instantaneous LVR rate as a fraction of pool value is
\[
    \frac{\LVR_t}{V(P_t)}
    =
    -\frac{1}{2}
    \cdot \frac{V''(P_t)G(P_t)^2}{V(P_t)}
    \cdot
    \frac{1}{h(t)^2}.
\]
For a fixed time $t$, the factor $h(t)^{-2}$ is common across all price states, so the state dependence of relative LVR is entirely captured by
$V$ and $G$. A (static) uniform AMM is one for which this quantity is constant for all $p\in(0,1)$; normalizing by $h(t)^{-2}$ and rearranging yields
\[
    -\frac{1}{2}V''(p)G(p)^2 = \alpha  \cdot V(p).
\]
Uniformity therefore concerns the cross-sectional dependence of LVR on price states, after normalizing by the common information-release clock. Setting $\beta:=2\alpha$ yields the boundary value problem (BVP) for the pool-value function.

\begin{definition}[Uniformity BVP]
\label{def:static-uniform-amm}
For a separable win-martingale with volatility profile $G$, the \underline{uniformity boundary value problem} is the eigenpair problem in $(\beta,V)$ such that
\[
    G(p)^2 \cdot V''(p)+\beta \cdot V(p)=0,
    \qquad \text{for all } p\in(0,1),
\]
with endpoint and interior conditions
\[
    V(0)=V(1)=0,
    \qquad
    V(p)>0 \quad \text{for all } p\in(0,1).
\]
\end{definition}

The uniformity BVP has a direct economic interpretation. The term
$G(p)^2$ measures the local quadratic variation of the fair price per unit of informational time, while $V''(p)$ measures the curvature of the AMM. Uniformity requires these two effects to offset each other so that the product $-V''(p)\cdot G(p)^2$ is proportional to the pool value $V(p)$ itself. Where prices fluctuate more intensely, the value function must have less curvature per dollar of pool value; where prices fluctuate less intensely, the invariant can sustain more curvature without producing excess relative LVR. We can think of uniformity as a matching condition between the stochastic dynamics of information arrival and the geometry of the AMM invariant.

This motivates the following terminology. For a given fair-price process that is a separable win-martingale, an AMM is a \textit{uniform AMM} or \textit{prediction market AMM (PM-AMM)} if its pool-value function satisfies the uniform-LVR condition for that process. In other words, a PM-AMM is an AMM whose adverse-selection losses are calibrated to the price dynamics of the underlying event so that instantaneous LVR is proportional to pool value at every price state. This emphasizes that PM-AMMs are not necessarily generic liquidity mechanisms, but rather are tailored to the specific dynamics of an underlying event. % In this sense, a PM-AMM is the invariant-based analogue of a proper prediction-market scoring rule: it is a market-making rule explicitly adapted to the dynamics of probabilistic information aggregation.

We now discuss several normative properties that distinguish uniform LVR from alternative loss distributions and motivate it as a design criterion for prediction market AMMs.

\paraheader{Closed-form dynamic liquidity management.}
Under any AMM, from \Cref{prop:lvr-decomposition}(\ref{en:v-lvr}), the expected pool value $m(t)=\E[V(P_t)]$ evolves as
\[
  m'(t)=\frac{1}{2h(t)^2}\E\!\left[V''(P_t)\,G(P_t)^2\right].
\]
Under a uniform AMM, this reduces to the closed-form ODE $m'(t)=-\frac{\beta}{2h(t)^2}\,m(t)$,
which depends only on the eigenvalue $\beta$ and the information clock $h$, not on the
distribution of $P_t$. As a consequence, the liquidity schedule that implements any prescribed
target expected loss schedule has an explicit closed-form expression involving only $\beta$, $h$,
$V(p_0)$, and the target schedule $D'(t)$; see \cref{sec:dynamic}. For a non-uniform AMM, the
expectation $\E[V''(P_t)\,G(P_t)^2]$ depends on the full distribution of $P_t$, so computing the
corresponding liquidity schedule requires solving the Kolmogorov forward equation for the
transition density of the price process, and is no longer available in closed form.

\paraheader{Robustness to liquidity provider behavior.}
If liquidity providers can dynamically enter and exit the pool, each LP faces an instantaneous loss rate per dollar deployed equal to $\LVR_t/V(P_t)$. If this ratio varies with the current price $P_t$, sophisticated LPs have an incentive to withdraw capital when the price is in a high-relative-LVR state and deposit when it is in a low-relative-LVR state. This creates adverse selection among LPs themselves: the mechanism becomes under-capitalized precisely in the states where the subsidy cost is highest, undermining the goal of uniform price discovery. Under a uniform AMM, every LP faces the same relative loss rate regardless of the current price. The participation decision reduces to comparing this uniform rate against the LP's opportunity cost of capital, independently of the price state. This eliminates the incentive to time entry and exit based on the current belief, making the mechanism robust to LP sophistication.

\paraheader{Minimax fairness across price states.}
Among all admissible pool-value functions, the uniform AMM minimizes the worst-case relative LVR across price states.

\begin{theorem}[Minimax optimality of uniform LVR]\label{prop:minimax}
Let $G$ satisfy \cref{ass:win-martingale} and let $(\beta,V^*)$ solve the uniformity BVP (\cref{def:static-uniform-amm}). For any $V\in C([0,1])\cap C^2((0,1))$ with $V(0)=V(1)=0$ and $V(p)>0$ for $p\in(0,1)$,
\[
    \sup_{p\in(0,1)}\frac{-V''(p)\,G(p)^2}{V(p)}
    \;\geq\;
    \beta,
\]
with equality if and only if $V$ is a positive multiple of $V^*$.
\end{theorem}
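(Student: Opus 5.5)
This is a Barta-type (Sturm comparison) inequality for the principal eigenvalue of $-G^2 V''$ on $(0,1)$ with Dirichlet conditions. I will argue by contradiction using a Wronskian identity between $V$ and $V^*$.

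Write $\lambda:=\sup_{p}\,(-V''G^2/V)$. If $\lambda=\infty$ there is nothing to prove, so suppose $\lambda<\beta$ (or, for the equality case, $\lambda\le\beta$). Then $G^2V''+\lambda V\ge 0$ on $(0,1)$, that is, $-V''\le \lambda V/G^2$. Introduce the weighted Wronskian
\[
W(p):=V'(p)V^*(p)-V(p)V^{*\prime}(p).
\]
Using $V^{*\prime\prime}=-\beta V^*/G^2$ we get
\[
W'=V''V^*-VV^{*\prime\prime}=V''V^*+\beta VV^*/G^2\ \ge\ (\beta-\lambda)\,\frac{VV^*}{G^2}\ \ge 0.
\]
So $W$ is nondecreasing, and strictly increasing if $\lambda<\beta$. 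Equivalently, $(V/V^*)'=-W/(V^*)^2$.

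The key step is to show that $W(p)\to 0$, or at least that its limits have the right signs, at both endpoints: $\liminf_{p\downarrow0}W\le 0$ and $\limsup_{p\uparrow1}W\ge 0$. Combined with monotonicity, this forces $W\equiv 0$. Hence $V/V^*$ is constant, so $V$ is a positive multiple of $V^*$ and $\lambda=\beta$. This contradicts $\lambda<\beta$, which proves the inequality. In the equality case $\lambda=\beta$, the same argument gives $W\equiv0$, hence $V=cV^*$; conversely, a multiple of $V^*$ clearly attains $\beta$.

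The endpoint analysis is the main obstacle, since $V$ is only assumed continuous on $[0,1]$ and $C^2$ inside. I plan to use that $V,V^*>0$ vanish at the endpoints and then handle the two contributions to $W$ separately.

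First, concavity of $V^*$ (from $V^{*\prime\prime}\le0$) makes $V^{*\prime}$ bounded near $0$ with a finite limit $\ge0$, so $VV^{*\prime}\to0$. For the term $V'V^*$, I will argue along a sequence instead of pointwise. Since $V(p)\to0$ with $V>0$, there exist $p_n\downarrow0$ with $V'(p_n)\le V(p_n)/p_n$; otherwise $\log V$ would be increasing faster than $\log p$ and could not vanish appropriately. This can be sharpened by choosing points where $V'\le 0$ or where $V'$ is small relative to $V/p$ (by the mean value theorem on $[p_n/2,p_n]$). Then $V'(p_n)V^*(p_n)\le V(p_n)V^*(p_n)/p_n\to0$, because $V^*(p)/p$ stays bounded by concavity. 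This gives $\liminf_{p\downarrow0}W\le0$. The argument at $1$ is symmetric.

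Since $W$ is monotone, these one-sided limit bounds suffice. If this sequence argument is delicate, I would fall back on Assumption~\ref{ass:win-martingale}(iii), which gives $G(p)\gtrsim p$ near $0$, together with the ODE $G^2V''\ge-\lambda V$ to control the growth of $V'$.
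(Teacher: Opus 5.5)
Your endpoint step fails, and the logic around it is reversed. Since $W'\ge(\beta-\lambda)VV^*/G^2\ge0$, a nondecreasing $W$ with $\liminf_{p\downarrow0}W\le0\le\limsup_{p\uparrow1}W$ is entirely consistent with $W\not\equiv0$. To conclude, you need the opposite inequalities $\lim_{p\downarrow0}W\ge0\ge\lim_{p\uparrow1}W$ (the limits exist by monotonicity). That is, you need a \emph{lower} bound on $V'V^*-VV^{*\prime}$ near $0$, whereas your sequence argument only bounds $V'V^*$ from above. The concavity facts you use are also false. For concave $V^*$ with $V^*(0)=0$, both $V^{*\prime}(p)$ and $V^*(p)/p$ are nonincreasing in $p$. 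They are therefore bounded below near $0$ but may blow up, as they do for the paper's own logistic/CPMM solution $V^*=2\sqrt{p(1-p)}$.

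More fundamentally, no endpoint analysis can close this gap, because the statement is false under its stated hypotheses. Take $G(p)=p(1-p)$, which satisfies Assumption~\ref{ass:win-martingale} (including (iii)), and $V_a(p)=p^a(1-p)^{1-a}$ with $a\in(0,1)$. A direct computation gives $G(p)^2V_a''(p)=-a(1-a)\,V_a(p)$, so every pair $(a(1-a),V_a)$ solves the uniformity BVP.

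With $(\beta,V^*)=(\tfrac14,2\sqrt{p(1-p)})$ and $V=V_a$, $a\neq\tfrac12$, the supremum equals $a(1-a)<\tfrac14$. With $(\beta,V^*)=(a(1-a),V_a)$, the function $V=V_{1-a}$ attains equality without being a multiple of $V^*$. In the first case your Wronskian is $W(p)=(2a-1)\bigl(p/(1-p)\bigr)^{a-1/2}$. For $a>\tfrac12$ it increases from $0$ to $+\infty$, satisfying exactly the bounds you planned to prove.

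The underlying issue is that at these singular endpoints both local solutions $p^{(1\pm\sqrt{1-4\beta})/2}$ vanish when $0<\beta<\tfrac14$. So the condition $V(0)=V(1)=0$ does not select principal behavior. The paper's Rayleigh-quotient proof hides the same defect. Its integration by parts needs $VV'\to0$ at both ends, and its comparison needs $V\in H^1_0(0,1)$ and $\beta$ equal to the $H^1_0$ infimum. No $V_a$ lies in $H^1_0(0,1)$, and $a(1-a)<\tfrac14\le$ that infimum.

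Your Wronskian route does become a complete proof once you add a hypothesis forcing $W\to0$ at both endpoints, for example $V,V^*\in H^1_0(0,1)$, or both $C^1$ up to the boundary. Then a nondecreasing $W$ with equal endpoint limits forces $\lambda\ge\beta$. In the equality case it gives $W\equiv0$, hence $V=cV^*$.
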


\noindent Thus a uniform AMM equalizes the worst-case relative LVR across price states: no price state bears a disproportionate share of the subsidy cost relative to pool value.

\paraheader{Budget legibility.}
Under uniform LVR, the subsidy functions as an effective interest rate on deployed capital: the
subsidizer loses a fixed fraction of pool value per unit of informational time, regardless of
where prices are. Budget accounting is therefore path-independent --- expected cumulative losses depend only on the liquidity schedule and the information clock, not on the realized price trajectory. This allows the subsidizer to make capital allocation decisions without forecasting the future path of beliefs.

\paraheader{Scope and alternatives.}
We emphasize that uniformity is one design choice within a broader family. A mechanism designer
who values information more at certain prices --- for example, near $p=1/2$ where binary decisions
are most sensitive to the probability --- might prefer a non-uniform design that concentrates subsidy in those regions. The paper's framework extends naturally to such designs: the uniformity BVP can be generalized by replacing the constant $\beta$ with a price-dependent weight function, yielding a modified eigenvalue problem. Uniform LVR provides a natural baseline as the design that treats all price states equally relative to pool value, and delivers the cleanest analytical framework via the properties above.

\subsection{From Win-Martingale to AMM Invariant}

We begin illustrating the correspondence between win martingales and uniform-loss invariants by showing that pool-value functions with uniform LVR can be constructed for the class of win-martingales we consider.

\begin{theorem}[Constructing a uniform AMM for a win-martingale]\label{prop:WMtoPV}
For a separable win-martingale satisfying \cref{ass:win-martingale} there exists a pair $(\beta,V)$ where $\beta>0$, and $V:[0,1]\to\R_+$ is continuous and twice-differentiable (in the interior) function such that $(\beta,L\cdot V)$ solves the uniformity BVP under $G$ for any $L>0$. Moreover, $V$ is concave and unimodal on $[0,1]$.
% \begin{gather*}
%     \beta\cdot V(p)+G(p)^2\cdot V''(p) = 0, \\[0.25\baselineskip]
%     V(0) = V(1) = 0.
% \end{gather*}
\end{theorem}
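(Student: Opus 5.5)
The plan is to treat the uniformity BVP as a singular Sturm--Liouville eigenvalue problem,
\[
  -V''(p) = \beta\, w(p)\, V(p),\qquad w(p):=\frac{1}{G(p)^2},
\]
on $(0,1)$ with Dirichlet conditions, and to take $\beta$ to be its principal eigenvalue with $V$ the positive principal eigenfunction. Since the equation is linear and homogeneous, any solution $(\beta,V)$ yields solutions $(\beta,L V)$ for every $L>0$, so the only real work is existence. Concavity comes directly from the equation: wherever $V>0$ we have $V''=-\beta V/G^2<0$. Unimodality then follows, because a strictly concave function vanishing at both endpoints has a unique interior maximizer.

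\textbf{Existence via a variational problem.} I would work in the Hilbert space $H_0^1(0,1)$ and minimize the Rayleigh quotient
\[
  \beta := \inf_{u\in H_0^1\setminus\{0\}} \frac{\int_0^1 u'^2\,dp}{\int_0^1 u^2/G^2\,dp}.
\]
The weight $w$ is singular at the endpoints, and this is where Assumption~\ref{ass:win-martingale}(iii) enters. By (iii), together with continuity and positivity of $G$ in the interior, there is a constant $c>0$ with $G(p)\ge c\,p(1-p)$ on $(0,1)$. Hence
\[
  w(p)\le \frac{C}{p^2(1-p)^2}.
\]
Hardy's inequality, $\int_0^{1/2} u^2/p^2 \le 4\int_0^{1/2} u'^2$ for $u(0)=0$, and its mirror image at $p=1$ then show that $\int u^2 w\le C'\int u'^2$. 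So $\beta>0$.

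The key step is attaining the infimum, which requires compactness of the embedding $H_0^1\hookrightarrow L^2(w)$. Hardy's inequality by itself gives only boundedness, not compactness, at an inverse-square singularity. There are two ways around this:
\begin{itemize}
  \item If (iii) is strict enough that $w=o(p^{-2})$, truncation near the endpoints plus Rellich compactness on compact subintervals gives compactness.
  \item In the borderline case, where $G\sim c\,p$ near $0$ so that $w\sim 1/(c^2p^2)$, the Hardy constant matters. The infimum may equal the bottom of the essential spectrum. I would then argue that $\beta\le \beta_{\mathrm{ess}}$ is still attained, or else construct the solution directly by an ODE argument.
\end{itemize}

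\textbf{ODE fallback, which I would actually make the main route for robustness.} For each $\lambda>0$, solve $V''=-\lambda V/G^2$ from an interior point with Prüfer/shooting methods. Local Lipschitzness of $G$ (assumption (ii)) gives unique $C^2$ solutions on $(0,1)$. Let $\phi_\lambda$ be the solution that is "principal" at $0$, meaning it vanishes at $0$ and satisfies $\phi'_\lambda(0^+)=1$; it exists because $\int_0 p\,w(p)\,dp$ behaves like a log divergence, which can be handled via the Frobenius-type comparison with the Euler equation $V''=-\lambda V/(c^2p^2)$. Then define $\beta$ as the smallest $\lambda$ at which $\phi_\lambda$ hits zero at $p=1$. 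Sturm comparison gives monotonicity of the zeros in $\lambda$, and continuity of the zero location in $\lambda$ gives existence of such a $\lambda$.

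\textbf{Main obstacle.} The hardest part is the endpoint analysis: showing that the principal solution at each singular endpoint exists and is continuous with limit $0$, and that the two ends can be matched. I would use the comparison $G\ge c\,p(1-p)$. For the model CPMM case $G=p(1-p)$ the solution $\sqrt{p(1-p)}$ is explicit with $\beta=1/4$, and I would use it as a supersolution and barrier to get the matching and the boundary limits $V(0)=V(1)=0$. Positivity in the interior comes from choosing the first zero. Continuity on $[0,1]$ follows because $V$ is concave and positive with zero boundary limits.
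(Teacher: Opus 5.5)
Your reduction to $-V''=\beta wV$ with $w=G^{-2}$ is fine and agrees with the paper, as are the Hardy bound giving $\beta>0$, the compactness argument when $p^2w(p)\to0$ and $(1-p)^2w(p)\to0$, and the derivations of concavity, unimodality and the $L$-scaling. The gap is existence in the borderline case that Assumption~\ref{ass:win-martingale}(iii) explicitly allows, e.g.\ $G(p)\sim ap$ near $0$. This case includes the paper's own logistic profile $G=p(1-p)$, and there your patch that the infimum ``is still attained'' is false. For $G=p(1-p)$ the critical constant is $\beta=\tfrac14$, and the general solution at that value is $\sqrt{p(1-p)}\,\bigl(A+B\log\tfrac{p}{1-p}\bigr)$. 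So the only positive solutions are multiples of $\sqrt{p(1-p)}$, whose Dirichlet energy diverges logarithmically at both ends, and no minimizer exists in $H_0^1$. The required $V$ must be obtained as a positive solution that is \emph{not} a minimizer. The paper does this by taking $\beta_*$ to be the largest $\lambda$ with $Q_\lambda\ge0$, invoking positive-solution (ground-state/criticality) theory for the form at $\beta_*$, and adding an endpoint analysis. It never needs compactness or shooting.

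Your ODE fallback does not close this gap as designed.
\begin{itemize}
\item When $\int_0 p\,w\,dp=\infty$, no solution satisfies $\phi_\lambda'(0^+)=1$. For $G\sim ap$ the principal solution behaves like $p^{s_+}$ with $s_+=\tfrac12\bigl(1+\sqrt{1-4\lambda/a^2}\bigr)<1$, and for $\lambda>a^2/4$ the equation is oscillatory at $0$, so there is no principal solution at all.
\item The rule ``smallest $\lambda$ with $\phi_\lambda(1^-)=0$'' also fails. For $G=p(1-p)$ the functions $p^s(1-p)^{1-s}$ solve the BVP with $\beta=s(1-s)$, so $\phi_\lambda(1^-)=0$ for every $\lambda\in(0,\tfrac14]$. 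There is no smallest such $\lambda$ and no sign change for an intermediate-value argument; any $\lambda$ would do here, but your scheme does not detect that.
\item Worse, take $G\sim ap$ at $0$, $\int^1(1-p)\,w\,dp<\infty$ at $1$, and $w$ small in between. Then $\beta_*=a^2/4$ and $\phi_\lambda(1^-)>0$ for every $\lambda$ at which $\phi_\lambda$ exists, because zeros enter through the singular endpoint $0$ rather than through $p=1$. Shooting from $0$ never finds a solution, although the principal solution at $1$ is one.
\item The CPMM barrier gives $\beta_*\ge c^2/4$ and non-oscillation, but not the matching.
\end{itemize}

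The missing idea is an endpoint dichotomy.
\begin{itemize}
\item If $\int_0 p\,w\,dp=\infty$, every positive solution automatically has $V(0^+)=0$. Indeed, if $V(0^+)=c>0$, then $V''\le-\tfrac{\lambda c}{2}w$ near $0$ gives $V(\delta)-V(p)\ge V'(\delta)(\delta-p)+\tfrac{\lambda c}{2}\int_p^\delta(r-p)\,w(r)\,dr\to\infty$.
\item If $\int_0 p\,w\,dp<\infty$, the bound $\int_0^\varepsilon u^2w\le\|u'\|_2^2\int_0^\varepsilon p\,w$ gives compactness near that end, and the Dirichlet condition selects the principal solution.
\end{itemize}
Since $Q_{\beta_*}\ge0$ makes the equation disconjugate on $(0,1)$, you then take the $H_0^1$ minimizer if both ends have $\int p\,w<\infty$. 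If only one end does, take the principal solution at that end. If neither does, any positive solution works.
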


\cref{prop:WMtoPV} shows that win-martingales admit concave pool-value functions \(V\) whose instantaneous LVR is proportional to pool value across price states. Given this \(V\), one can scale by the liquidity parameter \(L\) and recover demand functions or an invariant representation.

The key concept is that the uniformity BVP can be written as a singular Sturm-Liouville problem.\footnote{The problem is singular at the endpoints because \(G(0)=G(1)=0\).}
The endpoint growth assumptions on \(G\), together with Hardy's inequality, establishes the existence of a solution to this problem in the form of an eigenpair $(\beta,V)$ where $\beta$ is a scalar eigenvalue and $V$ is the pool value function of interest \citet{zettl2005sturm}. The theorem asserts existence, not uniqueness, of a solution; some volatility profiles can have a family of admissible eigenpairs. Further regularity conditions on $G$ or truncated boundary conditions at the endpoints (making the problem non-singular) can guarantee uniqueness; we leave these characterizations outside the scope of this paper.

% The key object linking the stochastic process and the AMM is the second-order linear differential equation
% \begin{gather*}
%     \beta\cdot V(p)+G(p)^2\cdot V''(p) = 0, \\[0.25\baselineskip]
%     V(0) = V(1) = 0.
% \end{gather*}
% This boundary value problem (BVP) reveals that curvature of a uniform pool-value function scales inversely with the square of volatility. Wherever $G(p)$ is large, i.e.\ beliefs fluctuate rapidly, the function $V$ must bend more sharply to offset the higher quadratic variation of the price process. Conversely, where the process is relatively stable, less curvature is required. The BVP thus reveals that uniformity is a dynamic compensation mechanism between diffusion intensity and pricing function curvature.

Our assumptions on $G$ also produce two convenient properties of uniform pool-value functions $V$ for the given price dynamics: $V$ is concave and unimodal, both of which are natural for pool-value functions in the prediction market setting. % This result establishes existence of uniform pool-value functions without requiring a specific parametric form of $G$, highlighting that uniform AMMs are not tied to special processes such as Brownian bridges or Gaussian score dynamics, but rather exist for a broad class of win-martingales.
\Cref{cor:scale} further clarifies the structure of $V$ by showing that PM-AMMs are invariant to volatility scaling. When the volatility of the fiar price process is scaled by a multiple $c>0$, the same pool-value function V remains uniform; only the proportionality constant adjusts. Hence, uniform AMM shapes depends on the relative distribution of volatility across price states, not on its absolute magnitude: scaling overall uncertainty changes the rate of LVR but not where liquidity must concentrate to achieve uniform LVR.

\begin{corollary}[Scale invariance of uniform AMMs]\label{cor:scale}
Suppose that $\{P_t\}$ is a win-martingale controlled by $G$ satisfying Assumption \ref{ass:win-martingale}. Let $\tilde G(p)=c\cdot G(p)$ for some $c>0$. If $(\beta,V)$ solves the uniformity BVP for $G$, then $(c^2 \beta,V)$ solve the uniformity BVP for $\tilde G$.
\end{corollary}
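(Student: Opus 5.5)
The plan is to substitute directly into the uniformity BVP of \cref{def:static-uniform-amm}. With $\tilde G = cG$, the left-hand side of the equation for the candidate pair $(c^2\beta, V)$ is
\[
    \tilde G(p)^2 V''(p) + c^2\beta\, V(p)
    = c^2\bigl(G(p)^2 V''(p) + \beta V(p)\bigr),
\]
and this vanishes for every $p\in(0,1)$ because $(\beta,V)$ solves the BVP for $G$. The endpoint conditions $V(0)=V(1)=0$ and the interior positivity $V(p)>0$ involve only $V$ and not $G$, so they carry over unchanged. Since $c>0$ gives $c^2\beta>0$, the new eigenvalue keeps the sign convention of \cref{prop:WMtoPV}.

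It is also worth checking that $\tilde G$ still satisfies Assumption~\ref{ass:win-martingale}, so that the corollary really is a statement about an admissible separable win-martingale. Continuity, vanishing at the endpoints, interior positivity, local Lipschitz continuity, and the $\liminf$ endpoint conditions are all preserved under multiplication by a positive constant. The clock $h$ is untouched, so condition~(iv) is unaffected. Equivalently, one may keep $G$ and replace $h$ by $h/c$; this leaves $\tau$ finite on $[0,T)$ and infinite at $T$.

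The argument is purely algebraic, so I do not expect any real obstacle. The only point needing care is to state clearly that the regularity class of $V$ (continuous on $[0,1]$ and $C^2$ on $(0,1)$) is the same for both problems. It is also worth remarking, for interpretation, that the relative LVR ratio $-\tfrac12 V''\tilde G^2/V$ equals $c^2$ times the original ratio. The shape of $V$ is therefore unchanged and only the uniform rate rescales.
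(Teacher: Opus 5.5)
Your proposal is correct and matches the paper's proof: substitute $\tilde G = cG$, factor out $c^2$ so that $\tilde G^2 V'' + c^2\beta V = c^2(G^2V'' + \beta V) = 0$, and note that the endpoint and positivity conditions involve only $V$. Your extra checks, that $\tilde G$ still satisfies Assumption~\ref{ass:win-martingale} and the equivalent reformulation via $h/c$, are correct but not needed for the algebraic claim.
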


% \begin{proposition}[Constructing a pricing function from a pool-value function]\label{prop:PVtoPF}\todo{this should go into the AMM section}
% Suppose that $V$ is continuous, twice-differentiable, and concave on $[0,1]$ with $V(0)=V(1)=0$. Then there exists a pricing function $F:\R_+^2\to\R_+$ with pool-value function $V$. \textcolor{red}{Check/relate to other papers.} \textcolor{blue}{Equivalent to those other papers (Frongillo et al., Angeris et al.)}
% \end{proposition}

% Proposition \ref{prop:PVtoPF} completes the constructive program by moving from the pool value representation back to an implementable AMM invariant. While the uniformity condition is naturally expressed in terms of the pool-value function $V(p)$, actual on-chain market makers are typically defined by a reserve-based pricing function $F(x,y)$. This result shows that any sufficiently regular, concave V satisfying the boundary conditions can be ``lifted'' to such an invariant.

Economically, this means that the pool-value function is the fundamental design object. Once one specifies how LP value should vary with probability, subject to concavity and vanishing at the boundaries, there always exists a reserve-based mechanism that implements it. The concavity of $V$ guarantees well-posed arbitrage problems and interior supporting reserves, ensuring that the resulting $F$ satisfies the monotonicity and curvature properties required for a valid AMM.

Thus a calibrated martingale model directly determines the AMM curvature required for uniform LVR. We now construct uniform AMMs for two example win-martingales.

\subsection{PM-AMMs for Volatility Profile Examples}

We now characterize and solve, to the extent possible, the uniformity BVPs for each of the six volatility profiles in Section \ref{sec:wm-example}. In each case, once a positive concave solution \(V\) is obtained (with unit liquidity), the corresponding uniform-LVR AMM is recovered from the demand functions, from which an invariant representation may be derived.

\vspace{1em}

\paraheader{Absorbed Brownian motion.}
The uniformity BVP is
\[
V''(p)+\beta\cdot V(p)=0,\qquad V(0)=V(1)=0, \qquad V(p)>0 \ \forall \ p\in(0,1).
\]
An eigenpair solution for $L=1$ is
\[
V(p)=\sin(\pi p),\qquad \beta=\pi^2.
\]
The demand functions
\[
X(p)=\sin(\pi p)+\pi(1-p)\cos(\pi p),\qquad
Y(p)=\sin(\pi p)-\pi p\cos(\pi p).
\]
After a series of substitutions, we can show that an invariant function for this AMM is
\[
F(x,y;L)
:=
x+y-2L \left(\sqrt{1-u^2}+u\arcsin(u)\right)=0,
\]
where $u=(L\pi)^{-1}(x-y)$.

\vspace{1em}

\paraheader{Wright-Fisher / Jacobi diffusion.}
The uniformity BVP is
\[
p(1-p)\cdot V''(p)+\beta \cdot V(p)=0,\qquad V(0)=V(1)=0\qquad V(p)>0 \ \forall \ p\in(0,1).
\]
An eigenpair solution for $L=1$ is
\[
V(p)=p(1-p),\qquad \beta=2.
\]
The demand functions induced by $V$ are
\begin{gather*}
X(p)=(1-p)^2, \qquad Y(p)=p^2.
\end{gather*}
implying that $\sqrt{X(p)}+\sqrt{Y(p)}=\sqrt{L }$, resulting in the invariant
\[
F(x,y)=\sqrt{x}+\sqrt{y}=\sqrt{L }.
\]
This invariant has been referred to in the literature as a \textit{constant elasticity market maker} with parameter $1/2$ \citep{frongillo2024axiomatic}.

\vspace{1em}

\paraheader{Logistic / log-odds diffusion.}
The uniformity BVP is
\[
p^2(1-p)^2 \cdot V''(p)+\beta \cdot V(p)=0,\qquad V(0)=V(1)=0\qquad V(p)>0 \ \forall \ p\in(0,1).
\]
An eigenpair solution for $L=1$ is
\[
V(p)=2 \sqrt{p(1-p)},\qquad \beta=\frac14.
\]
From Section \ref{sec:amm-example}, we know that this $V$ corresponds to the CPMM, so the CPMM is precisely the PM-AMM for a diffusion that is linear in log-odds space.

\vspace{1em}

% \paraheader{General Beta-type family.}
% The uniformity BVP is
% \[
% p^{2a}(1-p)^{2b}V''(p)+\beta\cdot V(p)=0,\qquad V(0)=V(1)=0.
% \]
% In general, this equation does not admit a closed-form solution. The special cases
% \[
% (a,b)=\Bigl(\frac12,\frac12\Bigr)
% \quad\text{and}\quad
% (a,b)=(1,1)
% \]
% recover the Wright-Fisher/CES and logistic/CPMM constructions above. More generally, the family \(p^a(1-p)^b\) yields a corresponding family of uniform AMMs that interpolate between these benchmarks and allow asymmetric boundary behavior when \(a\neq b\).

\paraheader{Gaussian score dynamics.}
The uniformity BVP is
\[
\phi\!\bigl(\Phi^{-1}(p)\bigr)^2 \cdot V''(p)+\beta \cdot V(p)=0,\qquad V(0)=V(1)=0\qquad V(p)>0 \ \forall \ p\in(0,1).
\]
An eigenpair solution for $L=1$ is
\[
V(p)=\phi(z),\qquad \beta=1.
\]
where \(z=\Phi^{-1}(p)\).
The demand functions induced by $V$ are
\[
X(p)= \phi(\Phi^{-1}(p))-(1-p)\cdot \Phi^{-1}(p),
\qquad
Y(p)= \phi(\Phi^{-1}(p))+p\cdot \Phi^{-1}(p).
\]
One way to express this as an invariant is as follows:
\begin{gather*}
    F(x,y) = (y-x)\cdot \Phi\left(\frac{y-x}{L }\right) + L \cdot\phi\left(\frac{y-x}{L }\right)-y=0.
\end{gather*}

\paraheader{The most exciting game.}
The uniformity BVP is
\[
    \frac{\sin^2(\pi p)}{\pi^2}\cdot V''(p)+\beta\cdot V(p)=0,
    \qquad
    V(0)=V(1)=0.
\]
An eigenpair with symmetric $V$ and $L=1$ is given by
\[
    V(p)
    =
    \sqrt{\sin(\pi p)}
    \left[
        \mathcal{P}_{-1/2}\!\left(\cos(\pi p)\right)
        +
        \mathcal{P}_{-1/2}\!\left(-\cos(\pi p)\right)
    \right],
    \qquad \beta = \frac{1}{4}.
\]
where \(\mathcal{P}_\nu\) denotes the Ferrers function of the first kind on
\((-1,1)\). The induced demand functions and invariant are omitted for brevity.

\subsection{From AMM Invariant to Win-Martingale}

For the converse direction of our correspondence, we begin with an existing pricing function and ask whether there is a compatible probability dynamics under which the AMM is uniform. \Cref{prop:PVtoWM} shows that, under appropriate curvature conditions on the induced pool-value function V, such a process always exists.

\begin{theorem}[Constructing a win-martingale that makes an AMM uniform]\label{prop:PVtoWM}
Suppose that $V:[0,1]\to\R_+$ is continuous, twice-differentiable on \((0,1)\), satisfies $V(0)=V(1)=0$, $V(p)>0$, and $V''(p)<0$ for $p\in(0,1)$. If $-V(p)/V''(p)$ is locally Lipschitz on \((0,1)\) and extends continuously to \([0,1]\) with endpoint values zero, then the process $\{P_t\}$ with dynamics
\begin{gather*}
    dP_t = \frac{G(P_t)}{h(t)}\,dW_t
\end{gather*}
where $\displaystyle G(P_t)=\sqrt{\beta\cdot\frac{V(P_t)}{-V''(P_t)}}$ for $\beta>0$ and $h(t)$ satisfies Assumption \ref{ass:win-martingale}(iv) is a win-martingale, and $(\beta,V)$ solves the uniformity BVP under the volatility profile given by $G$.
\end{theorem}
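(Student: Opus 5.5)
The argument has two parts. The uniformity claim is a direct algebraic check. The win-martingale claim follows by verifying that the constructed $G$ satisfies the hypotheses of \cref{prop:separable-win-martingale}, except the endpoint growth condition (iii), which that proposition's proof should not need.

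\textbf{Uniformity.} Define $R(p):=-V(p)/V''(p)$ on $(0,1)$. This is strictly positive, because $V>0$ and $V''<0$ there. Set $G(p):=\sqrt{\beta R(p)}$ for $p\in(0,1)$ and $G(0)=G(1)=0$. Then
\[
G(p)^2V''(p)=\beta\cdot\frac{V(p)}{-V''(p)}\cdot V''(p)=-\beta V(p),
\]
which is exactly the ODE in \cref{def:static-uniform-amm}. The endpoint and positivity conditions $V(0)=V(1)=0$ and $V>0$ on $(0,1)$ hold by hypothesis. So $(\beta,V)$ solves the uniformity BVP under $G$.

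\textbf{Regularity of $G$.} Next I check conditions (i), (ii) and (iv) of Assumption~\ref{ass:win-martingale}.
\begin{itemize}
\item Continuity on $[0,1]$: by hypothesis $R$ extends continuously to $[0,1]$ with $R(0)=R(1)=0$, and the square root is continuous, so $G$ is continuous on $[0,1]$ with $G(0)=G(1)=0$.
\item Positivity: $G>0$ on $(0,1)$ because $R>0$ there.
\item Local Lipschitz property on $(0,1)$: on any compact $[a,b]\subset(0,1)$, $R$ is Lipschitz and bounded below by some $\delta>0$. Since $\sqrt{\cdot}$ is Lipschitz on $[\delta,\infty)$ with constant $1/(2\sqrt\delta)$, the composition $G$ is Lipschitz on $[a,b]$.
\item The clock condition (iv) on $h$ is assumed directly.
\end{itemize}
Condition (iii) need not hold for this $G$. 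The text after Assumption~\ref{ass:win-martingale} says explicitly that (iii) is not needed for the win-martingale property, so I will invoke \cref{prop:separable-win-martingale} only through (i), (ii) and (iv).

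\textbf{Win-martingale property.} To be safe, I will also sketch the argument directly.
\begin{itemize}
\item Local Lipschitzness gives a unique strong solution up to the exit time from $(0,1)$. Stopping the solution at the endpoints (which are absorbing since $G=0$ there) gives a process with values in $[0,1]$.
\item $P$ is a bounded local martingale, hence a true martingale, and it converges almost surely as $t\uparrow T$.
\item Time-change to $Q_\tau$ with $dQ_\tau=G(Q_\tau)\,dB_\tau$. Since $\tau(T)=\infty$, it remains to show that $Q_\infty\in\{0,1\}$.
\item Suppose instead that $Q_\infty=q\in(0,1)$ with positive probability. The quadratic variation $\int_0^\infty G(Q_s)^2\,ds$ is finite because $Q$ is bounded, so $\E\big[\int_0^\infty G(Q_s)^2\,ds\big]\le 1$. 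But on that event $G(Q_s)^2\to G(q)^2>0$, which forces the integral to be infinite. This contradiction shows $P_T\in\{0,1\}$ almost surely.
\end{itemize}

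\textbf{Main obstacle.} The delicate point is making sure the invocation of \cref{prop:separable-win-martingale} does not secretly rely on the endpoint growth condition (iii), and handling the possibility that the diffusion hits an endpoint in finite informational time. Both issues are covered by treating the endpoints as absorbing and using the quadratic-variation argument above, which never uses (iii).
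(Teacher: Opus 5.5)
Your proposal is correct and follows the paper's proof. You define $G^2=\beta V/(-V'')$, verify the BVP algebraically, check Assumption~\ref{ass:win-martingale}(i), (ii) and (iv), and reuse the argument of \cref{prop:separable-win-martingale}, which never needs (iii). Two of your additions are small improvements: your Lipschitz bound for the square root on compacts (where $-V/V''\ge\delta>0$) fills in a step the paper only asserts, and your bound $\E[\langle Q\rangle_\infty]\le 1$ is a slightly more elementary substitute for the paper's Dambis--Dubins--Schwarz step in proving absorption at $\{0,1\}$.
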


Once \(V\) is fixed, the compatible volatility profile \(G\) is pinned down by curvature: quadratic variation must offset \(V''\) so that instantaneous LVR is proportional to pool value.\footnote{Assumption~\ref{ass:win-martingale}(iii) is not required for the constructed process to be a win-martingale; the converse result only uses the clock condition in Assumption~\ref{ass:win-martingale}(iv), together with the stated regularity of the constructed volatility profile.} Together, \cref{prop:WMtoPV,prop:PVtoWM} give a bidirectional correspondence between AMM geometry and belief dynamics.

\subsection{Uniform Win-Martingales for AMM Examples}

\paraheader{Constant product market maker (CPMM).}
The CPMM is has pool-value function
\begin{gather*}
    V(p) = 2L \sqrt{p(1-p)}.
\end{gather*}
The ratio $-V(p)/V''(p)$ yields the dynamics for the process $\{P_t\}$ under which the CPMM has uniform LVR:
\begin{gather*}
    dP_t = \frac{P_t(1-P_t)}{h(t)}\,dW_t.
\end{gather*}

\paraheader{Logarithmic market scoring rule (LMSR).}
The LMSR is has pool-value function is
\begin{gather*}
    V(p) = L \cdot H(p).
\end{gather*}
where $H(p)=-(p\log p+(1-p)\log(1-p))$ is the entropy function. The ratio $-V(p)/V''(p)$ yields the dynamics for the process $\{P_t\}$ under which the LMSR has uniform LVR:
\begin{gather*}
    dP_t = \frac{\sqrt{P_t(1-P_t)\cdot H(P_t)}}{h(t)}\,dW_t.
\end{gather*}
Figure~\ref{fig:all} compares these canonical AMMs with the PM-AMMs constructed from the
Gaussian-score and most-exciting-game profiles.
%Additional depth plots appear in \cref{app:supp-figures}.

\begin{figure}[!htbp]
    \centering
    \includegraphics[width=0.95\linewidth]{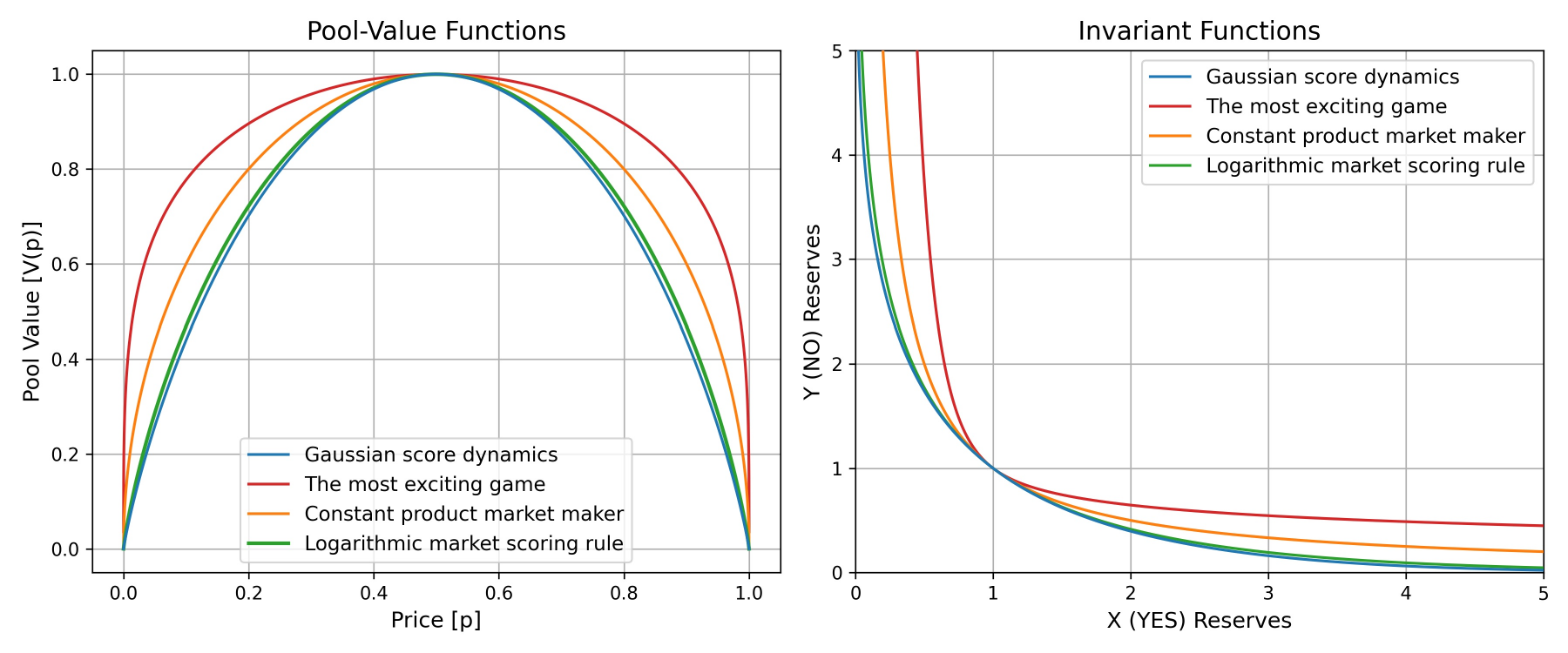}
    \includegraphics[width=0.95\linewidth]{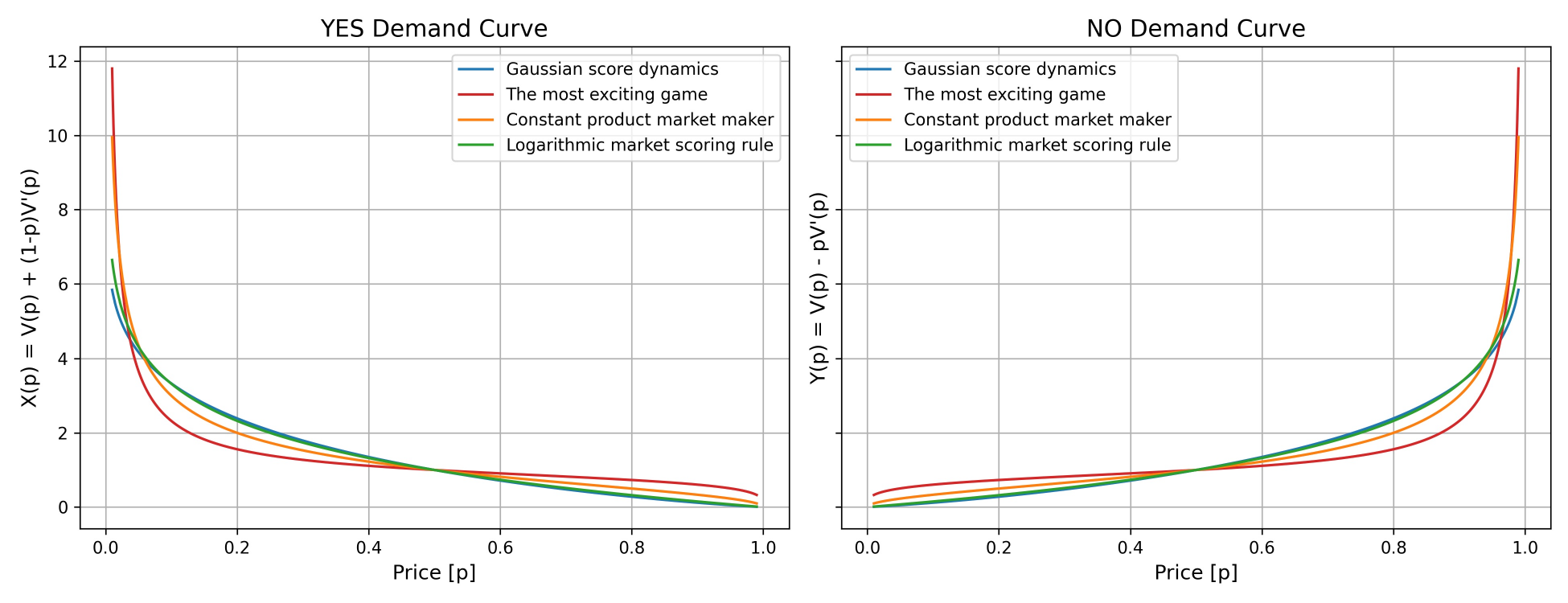}
    \caption{Comparison of the pool-value functions (top left), invariant functions (top right), YES-token demand function (bottom left), and NO-token demand function (bottom right) for the CPMM, LMSR, and the PM-AMMs for Gaussian score dynamics and ``the most exciting game.''}
    \label{fig:all}
\end{figure}

\begin{figure}[!htbp]
    \centering
    \includegraphics[width=0.475\linewidth]{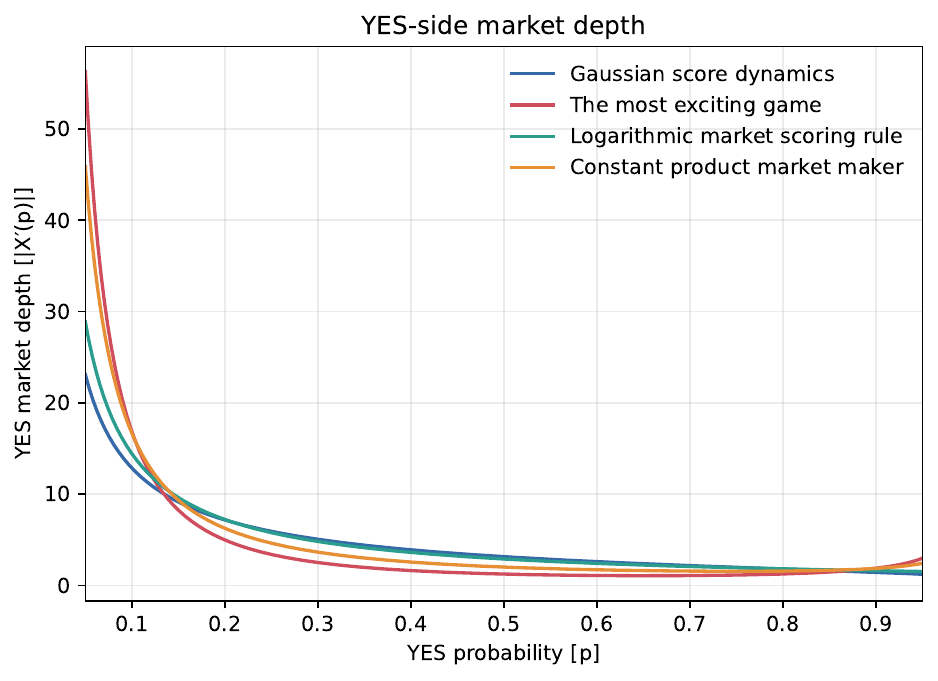}
    \includegraphics[width=0.475\linewidth]{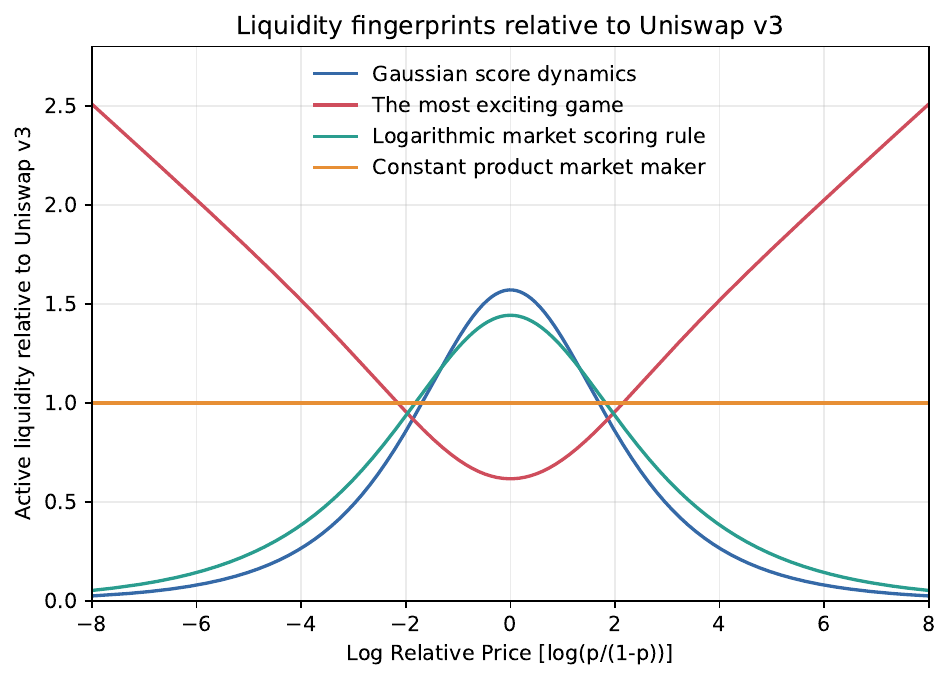}
    \caption{Left: absolute YES-side market depth, \(|X'(p)|\), measuring the local quantity of YES reserves available per unit change in the implied YES probability \(p\). Right: the same AMMs’ Uniswap-v3-equivalent liquidity fingerprints, expressed as active liquidity relative to the CPMM benchmark and plotted against log relative price, \(\log[p/(1-p)]\).}
    \label{fig:some}
\end{figure}

\subsection{Example: Violation of Uniformity}

The Gaussian score dynamics illustrate what goes wrong when the AMM curvature
is not matched to the price process. Suppose the fair price follows a volatility profile
\[
    G(p)=\phi(\Phi^{-1}(p)),
\]
but liquidity is provided through the CPMM pool-value function
\[
    V_{\mathrm{CPMM}}(p)=2L\sqrt{p(1-p)}.
\]
Then
\[
    V_{\mathrm{CPMM}}''(p)
    =
    -\frac{L}{2[p(1-p)]^{3/2}}.
\]
The instantaneous LVR rate as a fraction of pool value is therefore
\[
    \frac{\LVR_t}{V_{\mathrm{CPMM}}(P_t)}
    =
    -\frac{1}{2h(t)^2}
    \frac{V_{\mathrm{CPMM}}''(p)G(p)^2}{V_{\mathrm{CPMM}}(p)}
    =
    \frac{\phi(\Phi^{-1}(p))^2}
         {8h(t)^2p^2(1-p)^2}.
\]
This expression is not constant across price states. In fact, it explodes near
both endpoints. Let \(z=\Phi^{-1}(p)\). As \(p\downarrow0\), Mills' ratio gives
\[
    p=\Phi(z)\sim\frac{\phi(z)}{-z},
    \quad z\to-\infty \qquad \implies \qquad
    \frac{\phi(\Phi^{-1}(p))^2}{p^2(1-p)^2}
    \sim z^2
    \to\infty.
\]
Similarly, as \(p\uparrow1\), \(1-p\sim \phi(z)/z\) with \(z\to+\infty\), and
the same ratio diverges. Thus, under Gaussian score dynamics, the CPMM exposes
LPs to unbounded relative instantaneous LVR near prices close to zero or one,
whereas the Gaussian-score PM-AMM has relative instantaneous LVR equal to
\(1/(2h(t)^2)\) at every price state. The blow-up above is for LVR per unit of
pool value, which is the object controlled by uniformity.

\section{Dynamic PM-AMMs: Targeted Loss Over Time}\label{sec:dynamic}

The static theory chooses a PM-AMM invariant so that instantaneous LVR is uniform across price states. We now ask when those expected losses are incurred.

Once a uniform PM-AMM value function has been fixed, a deterministic liquidity schedule can implement any prescribed expected cumulative loss schedule. The same PM-AMM geometry can therefore be paired with front-loaded or back-loaded liquidity programs. Uniformity separates \emph{statewise} from \emph{timewise} risk: \(V\) fixes how losses scale across prices, while \(\{L_t\}\) determines how much exposure is deployed at each time. Without uniformity, changing liquidity would also distort the cross-sectional distribution of losses.

Given an AMM specified by its pool-value function $V$, let \(\{L_t\}_{0\le t<T}\) be a deterministic process denoting the LP's liquidity deployed into the AMM. Define the LP's marked-to-market pool value \(U_t\) and total wealth $W_t$ as follows:
\begin{gather*}
    U_t:=L_t V(P_t); \\[0.25\baselineskip]
    W_t:=c_0+U_t-\int_0^t \dot L_s\,V(P_s)\,ds,
\end{gather*}
where \(\dot L_s=dL_s/ds\), and the integral captures the cumulative cash transferred to or from the AMM as liquidity is adjusted over time. We interpret a target loss schedule as a function \(D\in C^1([0,T))\) such that $D(0)=0$, $D(t)\in[0,1]$ for all $t\in[0,T)$, and $D'(t)\geq0$ for all $t\in[0,T)$, where \(D(t)\) is the desired expected cumulative loss by time \(t\), expressed as a fraction of initial wealth. Since \(D\) is nondecreasing and bounded, the terminal limit \(D(T-):=\lim_{t\uparrow T}D(t)\) exists and satisfies \(D(T-)\le 1\).

\begin{theorem}[Implementing a target expected loss schedule]\label{thm:dynamic-target-loss}
Let \(w_0>0\) denote the LP's initial total wealth. Let \(D\in C^1([0,T))\) satisfy $D(0)=0$, $D(t)\in[0,1]$ for all $t\in[0,T)$, and $D'(t)\geq0$ for all $t\in[0,T)$. Under the liquidity schedule given by
\begin{equation*}
L_t
:=
\frac{2w_0}{\beta V(p_0)}\,h(t)^2
\exp\!\left(\int_0^t \frac{\beta}{2h(s)^2}\,ds\right)
D'(t),
\end{equation*}
with initial cash holdings $c_0:=w_0-L_0V(p_0)$, nonnegative initial cash feasibility requires
\[
    \frac{2h(0)^2D'(0)}{\beta}\le 1,
\]
and, whenever this condition holds,
\begin{enumerate}[(i)]
    \item $\E[W_t]=w_0(1-D(t))$ for all \(t<T\), i.e.\ the liquidity schedule $\{L_t\}$ implements the target expected cumulative loss schedule $D$;
    \item $\LVR_t = \frac{\beta}{2\cdot h(t)^2}\cdot L_t\cdot V(P_t)$, i.e.\ the instantaneous
      LVR for under the liquidity schedule is proportional to the LP position's value scaled by
      the information release rate.
\end{enumerate}
\end{theorem}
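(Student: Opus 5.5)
The plan is to avoid stochastic integrals where possible and reduce everything to a deterministic ODE for $m(t):=\E[V(P_t)]$, where $V$ is the unit-liquidity uniform pool-value function with eigenvalue $\beta$ (from \cref{prop:WMtoPV}).

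\emph{Step 1: the ODE for $m$.} By \cref{prop:lvr-decomposition}(\ref{en:v-lvr}), $m(t)=V(p_0)-\E[\int_0^t \LVR_s\,ds]$. The uniformity BVP gives, pointwise,
\[
\LVR_s=-\tfrac12 V''(P_s)G(P_s)^2/h(s)^2=\tfrac{\beta}{2h(s)^2}V(P_s).
\]
This integrand is nonnegative, so Tonelli lets us swap expectation and time integral, giving
\[
m(t)=V(p_0)-\int_0^t \tfrac{\beta}{2h(s)^2}m(s)\,ds .
\]
Since $\tau(t)<\infty$ for $t<T$ (Assumption~\ref{ass:win-martingale}(iv)), $m$ is absolutely continuous with $m'=-\tfrac{\beta}{2h^2}m$. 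Hence
\[
m(t)=V(p_0)\exp\!\Big(-\int_0^t\tfrac{\beta}{2h(s)^2}ds\Big).
\]

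\emph{Step 2: expected wealth.} Since $L$ is deterministic, taking expectations in the definition of $W_t$ and applying Fubini (legitimate because $0\le V\le \max V<\infty$ and $\dot L$ is locally integrable on $[0,t]$) gives
\[
\E[W_t]=c_0+L_t m(t)-\int_0^t \dot L_s m(s)\,ds .
\]
Integrating by parts turns this into
\[
\E[W_t]=c_0+L_0V(p_0)+\int_0^t L_s m'(s)\,ds=w_0-\int_0^t \tfrac{\beta}{2h(s)^2}L_s m(s)\,ds .
\]
Substituting the prescribed $L_s$ and the formula for $m(s)$, the exponentials and the factors $h(s)^2$ and $\beta/2$ cancel, leaving the integrand $w_0D'(s)$. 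Using $D(0)=0$, this yields $\E[W_t]=w_0(1-D(t))$, which is (i).

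Two side computations remain. The cash condition is $c_0\ge 0$, i.e.\ $L_0V(p_0)\le w_0$. Plugging in $L_0=\tfrac{2w_0}{\beta V(p_0)}h(0)^2D'(0)$ shows this is exactly $2h(0)^2D'(0)/\beta\le 1$. For (ii), the LVR of a position with liquidity $L_t$ is $-\tfrac12 L_tV''(P_t)G(P_t)^2/h(t)^2$. The uniformity BVP immediately gives $\tfrac{\beta}{2h(t)^2}L_tV(P_t)$. Equivalently, It\^o applied to $U_t$ gives $dW_t=L_t\,dV_t$, so this is the drift of $W$.

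\emph{Main obstacle.} The delicate point is regularity, not algebra. Step 2 needs $\dot L$ to exist and be integrable, which requires $D'$ (hence $D$) to be slightly smoother than $C^1$; I would assume $D'$ absolutely continuous, as the definition of $W_t$ via $\dot L_s$ implicitly does. I would also deliberately route the argument through \cref{prop:lvr-decomposition}(ii) and Tonelli rather than taking expectations of $\int L_sV'(P_s)G(P_s)/h(s)\,dW_s$, since $V'$ may blow up at the endpoints and the martingale property of that stochastic integral would otherwise need separate justification.
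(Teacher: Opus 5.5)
Your proposal is correct and takes essentially the paper's route. You reduce to the linear ODE $m'(t)=-\frac{\beta}{2h(t)^2}m(t)$ for $m(t)=\E[V(P_t)]$, citing \cref{prop:lvr-decomposition}(ii) plus Tonelli where the paper redoes the localization inline. You then evaluate $\E[W_t]=c_0+L_tm(t)-\int_0^t\dot L_s m(s)\,ds$ by integrating by parts where the paper differentiates, and read off (ii) from the uniformity BVP, matching the paper's It\^o step. Your regularity caveat is well taken, since the paper simply assumes $L$ is differentiable. Note, however, that $\dot L$ also needs $h(t)^2$ to be absolutely continuous, not just $D'$, because Assumption~\ref{ass:win-martingale}(iv) only constrains $\int h^{-2}$.
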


Thus, once the static uniform AMM has been fixed, the time profile of expected losses is controlled by the liquidity schedule while preserving statewise uniformity. At each time \(t\),
\[
    \LVR_t
    =
    \frac{\beta}{2h(t)^2}L_tV(P_t),
\]
so losses remain proportional to the current marked-to-market value of the LP's position.

\begin{corollary}[Withdrawal-only implementation]\label{cor:general-withdrawal-only}
If the map given by
\[
t\mapsto h(t)^2\exp\!\left(\int_0^t \frac{\beta}{2h(s)^2}\,ds\right)D'(t)
\]
is nonincreasing in $t$ for $t\in[0,T)$ and
\[
c_0=w_0-L_0 V(p_0)\ge 0.
\]
Then the liquidity schedule $\{L_t\}$ is nonincreasing, so the target expected-loss schedule is implementable by deterministic liquidity withdrawals only without external recapitalization.
\end{corollary}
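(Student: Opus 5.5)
The plan is to read the corollary off the explicit formula for $L_t$ in \cref{thm:dynamic-target-loss}, so that monotonicity of the liquidity schedule and feasibility of the cash account follow almost immediately. Write
\[
L_t=\kappa\, g(t), \qquad \kappa:=\frac{2w_0}{\beta V(p_0)}>0, \qquad g(t):=h(t)^2\exp\!\left(\int_0^t \frac{\beta}{2h(s)^2}\,ds\right)D'(t).
\]
Here $w_0>0$, $\beta>0$ by \cref{prop:WMtoPV}, and $V(p_0)>0$ because $p_0\in(0,1)$. Since $\kappa$ is a positive constant independent of $t$, the hypothesis that $g$ is nonincreasing on $[0,T)$ gives directly that $t\mapsto L_t$ is nonincreasing. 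If $D'$ and $h$ are only continuous, so that $\dot L_t$ need not exist pointwise, I would still interpret the liquidity adjustments as a monotone (finite-variation) deterministic process. In the smooth case, $\dot L_t\le 0$.

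Next comes the economic conclusion. In the wealth definition, the term $-\int_0^t \dot L_s V(P_s)\,ds$ is the cumulative cash moved between the AMM and the LP's outside account. Because $\dot L_s\le 0$ and $V\ge 0$ (nonnegativity from \cref{prop:pool-value-well-posed-main}), every adjustment is a withdrawal: the cash account $c_0-\int_0^t\dot L_sV(P_s)\,ds$ is nondecreasing pathwise. The account starts at $c_0=w_0-L_0V(p_0)\ge 0$, which is exactly the second hypothesis and is equivalent to the feasibility condition $2h(0)^2D'(0)/\beta\le 1$ of \cref{thm:dynamic-target-loss}. So the account stays nonnegative on every path without any external injection of capital.

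Finally, since the schedule is the one in \cref{thm:dynamic-target-loss} and its feasibility condition holds, parts (i)--(ii) of that theorem apply verbatim. The withdrawal-only schedule therefore implements $\E[W_t]=w_0(1-D(t))$. The only delicate point I expect is regularity: $L_t$ should be well defined and of finite variation on each $[0,t]$, $t<T$, so that the wealth integral makes sense. For this I would use $D\in C^1$, continuity of $h$ on $[0,T)$ (implicit in the definition of the clock), and finiteness of $\tau(t)$ from \cref{ass:win-martingale}(iv). Monotonicity of $g$ already gives finite variation even if $g$ fails to be differentiable.
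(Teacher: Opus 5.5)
Your proposal is correct and follows essentially the same route as the paper: $L_t$ is a positive constant multiple of the given map, so it is nonincreasing, and $-\dot L_t V(P_t)\ge 0$ then makes the cash account nondecreasing from its starting value $c_0\ge 0$. Your additional observation that $c_0\ge 0$ is equivalent to $2h(0)^2D'(0)/\beta\le 1$, and your remarks on regularity, are sound refinements of points the paper leaves implicit.
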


These conditions identify when a target loss schedule is compatible with one-way de-risking rather than later recapitalization.

\paraheader{Square-root clock.}
For the clock \(h(t)=\sqrt{T-t}\), the liquidity schedule simplifies to
\[
    L_t
    =
    \frac{2w_0T^{\beta/2}}{\beta V(p_0)}
    (T-t)^{1-\beta/2}D'(t),
\]
so a linear expected-loss schedule has \(L_t\propto(T-t)^{1-\beta/2}\). Hence \(\beta<2\) corresponds to gradual withdrawal, \(\beta=2\) to constant active liquidity, and \(\beta>2\) to liquidity that grows as \(t\uparrow T\). For any volatility profile, replacing \(G\) by \(cG\) rescales \(\beta\) by \(c^2\). 

\paraheader{Fully-worked Gaussian score example.} Consider the Gaussian score dynamics ,
\[
    dP_t = \frac{\phi(\Phi^{-1}(p))}{\sqrt{T-t}}
\]
The corresponding uniform PM-AMM has
\[
    V(p)=\phi(\Phi^{-1}(p)),
    \qquad
    \beta=1.
\]
Substituting \(\beta=1\) into the square-root clock liquidity rule gives
\[
    L_t
    =
    \frac{2w_0\sqrt{T}}{V(p_0)}
    \sqrt{T-t}\,D'(t),
    \qquad 0\le t<T.
\]
Thus the liquidity schedule is proportional to \(D'(t)\sqrt{T-t}\): the square-root clock mechanically pulls liquidity down as resolution approaches, while the target loss schedule determines how much loss is intentionally allocated at each time.

\paraheader{Constant expected-loss rate.}
First suppose that expected losses accumulate at a constant rate from \(D(0)=0\) to terminal loss \(D(T-)=1/4\):
\[
    D(t)=\frac{t}{4T},
    \qquad
    D'(t)=\frac{1}{4T}.
\]
The liquidity rule becomes
\[
    L_t
    =
    \frac{w_0}{2V(p_0)}
    \sqrt{\frac{T-t}{T}}.
\]
This schedule withdraws liquidity smoothly over time. The expected loss rate is constant by construction, but the active liquidity as information is released faster closer to resolution.

\paraheader{Front-loaded expected-loss schedule.}
Now consider
\[
    D(t)=\frac{t}{2T+2t}, \qquad D'(t)=\frac{2T}{(2T+2t)^2}, 
    \qquad 0<t<T,
\]
with the continuous extension \(D(0)=0\). This schedule also satisfies \(D(T-)=1/4\), but the loses are front-loaded, resulting  in the liquidity schedule
\[
    L_t
    =
    \frac{w_0T^{3/2}}{V(p_0)}
    \frac{\sqrt{T-t}}{(T+t)^2}.
\]

% A fully worked Gaussian-score example appears in \cref{app:supp-dynamic}.

%%% Local Variables:
%%% mode: LaTeX
%%% TeX-master: "main"
%%% End:

\section{Conclusion}

This paper develops a framework for designing AMMs for binary prediction markets by matching AMM geometry to the stochastic dynamics of the fair price. We show that uniform LVR can be achieved by a boundary value problem linking the pool-value function $V$ to the win-martingale volatility profile $G$, yielding a bidirectional correspondence between belief processes and AMM invariants. We then extend the framework dynamically, showing that once the uniform-LVR invariant is fixed, liquidity schedules can implement prescribed expected loss profiles over time. These insights can allow LPs and AMM designers to allocate liquidity strategically based on the underlying event to either smooth expected losses or subsidize price discovery. Future directions include empirically calibrating \(G\) and \(h\) from prediction market data, fitting observed price paths to canonical win-martingales, or extending the theory to multi-outcome and combinatorial markets.

\clearpage

\bibliography{uniform-lvr}

\newpage

\appendix

% \section{Supplementary Material}\label{app:supp}

\section{Supplementary Derivations for Price Processes}\label{app:supp-price-processes}

\paraheader{Wright-Fisher / Jacobi diffusion.}
The standard population-genetic derivation starts from a finite diploid population with two alleles at a fixed locus. If the current allele-$A$ frequency is \(p\), the next generation is obtained by drawing \(2N\) genes from the parental gene pool, so the one-step frequency has binomial sampling variance \(p(1-p)/(2N)\). Accelerating generation time by \(2N\) and letting \(N\to\infty\) turns random genetic drift into a diffusion with generator \(\frac12 p(1-p)f''(p)\) in the neutral case, meaning absent selection and mutation. Equivalently, the limiting process satisfies \(dP_t=\sqrt{P_t(1-P_t)}\,dW_t\). This genetics story is distinct from the log-likelihood signal limit below: here the square-root volatility comes from sampling noise in population frequencies rather than additive evidence in log-odds.

\paraheader{Logistic / log-odds diffusion.}
For the logistic profile \(G(p)=p(1-p)\), define the log-odds process
\[
    \mathcal{O}_t := \log\frac{P_t}{1-P_t}.
\]
Under a separable win-martingale with this volatility profile and information clock \(h\), It\^o's formula yields
\[
    d\mathcal{O}_t
    =
    \frac{1}{h(t)}\,dW_t
    +
    \frac{P_t-\frac12}{h(t)^2}\,dt.
\]
Thus the log-odds process has constant instantaneous volatility, up to the information clock \(h(t)\). Equivalently, under informational time \(d\tau = h(t)^{-2}\,dt\),
\[
    d\mathcal{O}_\tau
    =
    dB_\tau
    +
    \left(\frac{e^{\mathcal{O}_\tau}}{1+e^{\mathcal{O}_\tau}}-\frac12\right)d\tau.
\]
The drift term above results from the fact that \(P_t\), rather than \(\mathcal{O}_t\), is a martingale.

This specification is natural in settings where information arrives through approximately additive increments to log-likelihood ratios. Suppose the terminal outcome is a latent binary state \(\theta\in\{0,1\}\) and the market observes conditionally independent binary signals. Conditional on the state, let
\[
    \mathbb{P}(S_n=1\mid \theta=0)=\frac12-\varepsilon,
    \qquad
    \mathbb{P}(S_n=1\mid \theta=1)=\frac12+\varepsilon .
\]
Each signal contributes a log-likelihood-ratio increment of order \(\varepsilon\); when signals arrive at rate \(\varepsilon^{-2}\), the log-odds process has a nondegenerate diffusive limit. Mapping back from log-odds to probabilities through the logistic function produces a probability diffusion with volatility proportional to \(p(1-p)\), up to an overall time scale. More generally, Bayes' rule updates posterior odds multiplicatively:
\[
    \frac{\mathbb{P}(\theta=1\mid \mathcal{F}_{n+1})}
         {\mathbb{P}(\theta=0\mid \mathcal{F}_{n+1})}
    =
    \frac{\mathbb{P}(\theta=1\mid \mathcal{F}_{n})}
         {\mathbb{P}(\theta=0\mid \mathcal{F}_{n})}
    \cdot
    \frac{\mathbb{P}(S_{n+1}\mid \theta=1)}
         {\mathbb{P}(S_{n+1}\mid \theta=0)}.
\]
Taking logarithms shows that each signal contributes additively to the log-odds:
\[
    \mathcal{O}_{n+1}
    =
    \mathcal{O}_n
    +
    \log
    \frac{\mathbb{P}(S_{n+1}\mid \theta=1)}
         {\mathbb{P}(S_{n+1}\mid \theta=0)}.
\]
The logistic diffusion is the continuous-time analogue of this mechanism, where evidence is approximately additive in log-likelihood-ratio space, while prices are obtained by mapping log-odds back to probabilities through the logistic function.

\paraheader{Gaussian score dynamics.}
Let a latent fundamental \(Z_t\) represent the score differential between two teams, candidates, or states, with positive values favoring the YES token. Suppose
\[
    dZ_t = \sigma\,dW_t,
\]
and the YES token pays off if the terminal score differential is nonnegative. Then
\[
    P_t
    =
    \E[\1\{Z_T\geq0\}\mid \mathcal{F}_t]
    =
    \mathbb{P}(Z_T\geq 0\mid \mathcal{F}_t)
    =
    \Phi\!\left(\frac{Z_t}{\sigma\sqrt{T-t}}\right),
\]
where \(\Phi\) denotes the standard normal distribution function. By It\^o's formula,
\[
    dP_t
    =
    \frac{\phi(\Phi^{-1}(P_t))}{\sqrt{T-t}}\,dW_t,
\]
where \(\phi\) is the standard normal density. Thus
\[
    G(p)=\phi(\Phi^{-1}(p)),
    \qquad
    h(t)=\sqrt{T-t}.
\]
The scale parameter \(\sigma\) drops out: increasing the volatility of the latent score also increases the posterior uncertainty about \(Z_T\) by the same factor, leaving the probability dynamics expressed in terms of \(P_t\) unchanged.

\paraheader{Brody-Hughston-Macrina clock.}
In the information-based asset-pricing framework of \citet{brody2008information}, the market observes an information process of the form
\[
    \xi_t = \sigma t X_T + B_t^{(T)},
    \qquad 0\leq t\leq T,
\]
where \(X_T\) is the terminal payoff and \(\{B_t^{(T)}\}_{0\leq t\leq T}\) is a Brownian bridge satisfying \(B_0^{(T)}=B_T^{(T)}=0\). The first term represents accumulating signals about the terminal payoff, while the Brownian bridge represents noise that is pinned down to zero at resolution. For a binary payoff \(X_T\in\{0,1\}\), the fair price is the conditional expectation
\[
    P_t = \mathbb{E}[X_T\mid \xi_t].
\]
Bayes' rule implies that the posterior log-odds are affine in the observed signal:
\[
    \mathcal{O}_t
    =
    \log\frac{P_t}{1-P_t}
    =
    \log\frac{P_0}{1-P_0}
    +
    \frac{T}{T-t}\left(\sigma \xi_t-\frac12\sigma^2 t\right).
\]
Thus the BHM information process generates belief dynamics that are naturally additive in log-likelihood-ratio space. In the binary case, the induced posterior process satisfies a separable diffusion of the logistic form
\[
    dP_t
    =
    \frac{\sigma T}{T-t} \cdot P_t(1-P_t)\,dW_t.
\]
Equivalently, in our notation,
\[
    G(p)=p(1-p),
    \qquad
    h(t)=\frac{T-t}{\sigma T}.
\]

\section{Proofs of Results}

\subsection{Proof of \cref{prop:separable-win-martingale}}

\begin{proof}
For (i), let
\[
        \tau(t):=\int_0^t h(s)^{-2}\,ds,\qquad 0\le t<T .
\]
By Assumption~3(iv), $\tau(t)<\infty$ for every $t<T$ and
$\tau(t)\nearrow\infty$ as $t\nearrow T$. Let $\{Q_t\}$ be the solution on $[0,\infty)$ of the absorbed diffusion
\[
        dQ_s=G(Q_s)\,dB_s,\qquad Q_0=p_0\in(0,1),
\]
where $\{B_t\}$ is a standard Brownian motion (in informational time) with natural filtration $\{\F_t^B\}$. The process is stopped and kept fixed after the first hitting time
\[
        \zeta:=\inf\{s\ge0:Q_s\in\{0,1\}\}.
\]
Since $G$ is locally Lipschitz on $(0,1)$, strictly positive on $(0,1)$, and satisfies $G(0)=G(1)=0$, we have strong existence and pathwise uniqueness up to $\zeta$, and the absorbed extension
is well-defined on all of $\tau\in[0,\infty)$.

The process $\{Q_t\}$ is a continuous local martingale taking values in $[0,1]$. Since it is bounded, it is a true martingale. Hence $Q_s$ converges almost surely and in $L^1$ to a limit $Q_\infty:=\lim_{s\to\infty} Q_s$. We claim that $Q_\infty\in\{0,1\}$ almost surely. Suppose, to the contrary, that $A:=\{Q_\infty\in(0,1)\}$ has positive probability. Then on $A$, there exist random constants $\varepsilon>0$ and $S<\infty$ such that
\[
        Q_s\in[\varepsilon,1-\varepsilon]\qquad\text{for all }s\ge S .
\]
Since $G$ is continuous and strictly positive on $[\varepsilon,1-\varepsilon]$, there is a constant $c_\varepsilon>0$ such that $G(x)^2\ge c_\varepsilon$ on this interval. Then on $A$, we have
\[
        \langle Q\rangle_\infty
        =
        \int_0^\infty G(Q_s)^2\,ds
        \ge
        \int_S^\infty c_\varepsilon\,ds
        =
        \infty .
\]
By the Dambis-Dubins-Schwarz theorem, a continuous local martingale with
quadratic variation diverging to infinity can be written as a Brownian
motion run for infinite time. Such a process cannot converge to a finite
limit, contradicting the assumption that \(Q_s\to Q_\infty\in(0,1)\) on the
event \(A\). Thus, $Q_\infty\in\{0,1\}$ almost surely.

Define $P_t:=Q_{\tau(t)}$ for $t\in[0,T]$ and $P_T\coloneqq Q_\infty$. By a deterministic time change,
\[
        dP_t
        =
        \frac{G(P_t)}{h(t)}\,dW_t
\]
for a Brownian motion $W$ with respect to the time-changed filtration. Since $\{Q_t\}$ is a bounded martingale, $\{P_t\}$ is a bounded martingale on every compact subinterval of $[0,T)$, and by the identity $Q_s=\mathbb E[Q_\infty\mid\mathcal F_s^B]$, we have
\[
        P_t
        =
        Q_{\tau(t)}
        =
        \mathbb E[Q_\infty\mid\mathcal F_{\tau(t)}^B]
        =
        \mathbb E[P_T\mid\mathcal F_t],
\]
for $t\in[0,T)$. Thus $\{P_t\}$ is a martingale on $[0,T]$ taking values in $[0,1]$, and satisfies $P_T\in\{0,1\}$ almost surely. Thus, $\{P_t\}$ is a win-martingale.

For (ii), since $P_T\in\{0,1\}$ almost surely, $P_T^2=P_T$, and the martingale property gives $\mathbb{E}[P_T\mid\mathcal{F}_t]=P_t$. Hence
\[
    \Var[P_T\mid\mathcal{F}_t]
    =
    \mathbb{E}[P_T^2\mid\mathcal{F}_t] - P_t^2
    =
    P_t - P_t^2
    =
    P_t(1-P_t).
\]
By It\^o isometry applied to the stochastic integral $P_T - P_t = \int_t^T (G(P_s)/h(s))\,dW_s$,
\[
    \mathbb{E}\!\left[(P_T-P_t)^2\,\middle|\,\mathcal{F}_t\right]
    =
    \mathbb{E}\!\left[\int_t^T \frac{G(P_s)^2}{h(s)^2}\,ds\,\middle|\,\mathcal{F}_t\right].
\]
Combining the two identities yields the claim.
\end{proof}

\subsection{Proof of \cref{prop:pool-value-well-posed-main}}

Fix $L>0$ and additionally assume that the level set
\[
        \Gamma_L:=\{(x,y)\in\mathbb R_{++}^2:F(x,y;L)=0\}
\]
is nonempty. Write the graph of $\Gamma_L$ as
\[
        \Gamma_L=\{(x,f_L(x)):x\in(0,\infty)\},
\]
i.e.\ $y=f(x)$. By the Implicit Function Theorem,
\[
        f_L'(x)=-\frac{F_x(x,f_L(x);L)}{F_y(x,f_L(x);L)} .
\]
By the Inada-like conditions on $F$, we also have $\lim_{x\searrow0}f'_L(x)=-\infty$ and $\lim_{x\nearrow}f'_L(x)=0$. Differentiating $f'_L$ again with respect to \(x\) yields
\[
        f_L''(x)
        =
        -\frac{F_{xx}F_y^2-2F_{xy}F_xF_y+F_{yy}F_x^2}{F_y^3}
        \bigg|_{(x,f_L(x);L)} .
\]
Since $F_y>0$ and $F_y^2F_{xx}-2F_xF_yF_{xy}+F_x^2F_{yy}<0$ by assumption, it follows that $f''_L(x)>0$, i.e.\ $f_L$ is strictly convex. For $p\in(0,1)$, write the objective as
\[
        \phi_p(x):=px+(1-p)\cdot f_L(x),\qquad x>0 .
\]
Then $\phi_p''(x)=(1-p)\cdot f_L''(x)>0$, so $\phi_p$ is strictly convex. Moreover, $\phi_p'(x)=p+(1-p)f_L'(x)$, so $\lim_{x\searrow}\phi_p'(x)=-\infty$ and $\lim_{x\nearrow\infty}\phi_p'(x)=p>0$. Thus, there exists a unique $x(p)\in(0,\infty)$ satisfying $phi_p'(x(p))=0$. Strict convexity implies that this point is the unique global minimizer of
$\phi_p$. Setting
\[
        X(p):=x(p),\qquad Y(p):=f_L(x(p)),
\]
we obtain the unique solution of the pool-value problem
\[
        V_L(p)
        =
        \inf_{(x,y)\in\Gamma_L}\{px+(1-p)y\}
        =
        p \cdot X(p)+(1-p) \cdot Y(p).
\]
The first-order condition is
\[
        p+(1-p)f_L'(X(p))=0 \iff \frac{F_x(X(p),Y(p);L)}
             {F_y(X(p),Y(p);L)}
        =
        \frac{p}{1-p},
\]
Since $f_L''>0$, the Implicit Function Theorem applied to
$p+(1-p)\cdot f_L'(x)=0$ implies that $X$ is $C^1$ on $(0,1)$, and hence so are
$Y$ and $V_L$.

For concavity, note that $V_L$ is the pointwise infimum over $\Gamma_L$ of
the affine functions
\[
        p\mapsto px+(1-p)y .
\]
Therefore $V_L$ is concave on $(0,1)$.

Finally, the Inada-like conditions imply that $\lim_{p\downarrow0} V_L(p)=\inf_{(x,y)\in\Gamma_L} y=0$ and $\qquad\lim_{p\uparrow1} V_L(p) = \inf_{(x,y)\in\Gamma_L} x=0$, so $V_L$ extends continuously to $[0,1]$ with $V_L(0)=V_L(1)=0$

\subsection{Proof of \cref{prop:demand-functions}}

Assume the hypotheses of Proposition~6, and let
\[
        V(p)=p\cdot X(p)+(1-p) \cdot Y(p)
\]
be the pool-value function at the unique optimizer $(X(p),Y(p))$. By the envelope theorem, we have $V'(p)=X(p)-Y(p)$, yielding a linear system in $X(p)$ and $Y(p)$ with solution
\[
        X(p)=V(p)+(1-p)\cdot V'(p),
        \qquad
        Y(p)=V(p)-p \cdot V'(p).
\]
This proves the demand-function identities.

Conversely, let $V:[0,1]\to\mathbb R_+$ be concave, continuous, differentiable on \((0,1)\), positive
on $(0,1)$, and satisfy $V(0)=V(1)=0$. For $L>0$, define
\[
        F(x,y;L)
        :=
        \inf_{q\in(0,1)}
        \frac{qx+(1-q)y}{V(q)}
        -
        L .
\]
We claim that the level set $F(x,y;L)=0$ has pool-value function $LV$.

Fix $p\in(0,1)$. By concavity of $V$, for every $q\in(0,1)$, we have
\[
        V(q)\le V(p)+V'(p)(q-p).
\]
Using the demand functions
\[
        X(p)=L(V(p)+(1-p)\cdot V'(p)),
        \qquad
        Y(p)=L(V(p)-p\cdot V'(p)),
\]
we have
\[
        q\cdot X(p)+(1-q) \cdot Y(p)
        =
        L(V(p)+(q-p)\cdot V'(p))
        \ge
        LV(q) \implies \frac{qX(p)+(1-q)Y(p)}{V(q)}\ge L.
\]
for all $q\in(0,1)$ with equality at $q=p$, so $F(X(p),Y(p);L)=0$. Moreover, $p\cdot X(p)+(1-p)\cdot Y(p)=L\cdot V(p)$, so $V_L(p)\le LV(p)$.

On the other hand, $F(x,y;L)=0$, so $qx+(1-q)y\ge L\cdot V(q)$ for all $q\in(0,1)$. Taking $q=p$ yields $px+(1-p)y\ge LV(p)$, so $V_L(p)\ge LV(p)$. Thus $V_L(p) = L \cdot V(p)$for all $p\in(0,1)$. 
\[
        V_L(p)=LV(p),\qquad p\in(0,1).
\]
Thus the reconstructed invariant is equivalent to the pool-value function $LV$. This converse reconstruction proves the pool-value identity only. Without stronger regularity and strict-concavity assumptions on \(V\), the infimum representation need not satisfy the \(C^2\), sign, curvature, or boundary conditions in Assumption~\ref{ass:pricing-function}.

\subsection{Proof of \cref{prop:lvr-decomposition}}

\begin{proof}
For (i), apply It\^o's formula to $V(P_t)$ using $dP_t = G(P_t)/h(t)\,dW_t$ and $d\langle P\rangle_t = G(P_t)^2/h(t)^2\,dt$:
\[
    dV(P_t)
    =
    V'(P_t)\,\frac{G(P_t)}{h(t)}\,dW_t
    +
    \frac{1}{2}V''(P_t)\,\frac{G(P_t)^2}{h(t)^2}\,dt
    =
    V'(P_t)\,\frac{G(P_t)}{h(t)}\,dW_t
    - \LVR_t\,dt.
\]
Nonnegativity of $\LVR_t$ follows from concavity of $V$.

For (ii), fix $t\in[0,T]$ and integrate (i) over $[0,t]$:
\[
    V(P_t) - V(p_0) + \int_0^t \LVR_s\,ds
    =
    \int_0^t V'(P_s)\,\frac{G(P_s)}{h(s)}\,dW_s.
\]
The right-hand side is a continuous local martingale. Let $\{\tau_n\}$ be a localizing sequence with $\tau_n\nearrow\infty$, so the stopped process is a true martingale with zero expectation at $t\wedge\tau_n$:
\[
    \E[V(P_{t\wedge\tau_n})] - V(p_0) + \E\!\left[\int_0^{t\wedge\tau_n}\LVR_s\,ds\right] = 0.
\]
As $n\to\infty$, $V(P_{t\wedge\tau_n})\to V(P_t)$ a.s.\ by path continuity, and the convergence holds in expectation by bounded convergence since $V$ is continuous on $[0,1]$ and therefore bounded; $\int_0^{t\wedge\tau_n}\LVR_s\,ds\nearrow\int_0^t\LVR_s\,ds$ by monotone convergence. This yields $V(p_0) - \E[V(P_t)] = \E\!\left[\int_0^t\LVR_s\,ds\right]$.
\end{proof}

\subsection{Proof of \cref{prop:WMtoPV}}

\begin{lemma}
\label{lem:hardy-critical-ground-state}
Let $w:(0,1)\to(0,\infty)$ be locally bounded. Suppose there exist constants
$C_0,C_1>0$ and $\delta\in(0,1/2)$ such that
\[
    w(p)\le \frac{C_0}{p^2},
    \qquad p\in(0,\delta),
\]
and
\[
    w(p)\le \frac{C_1}{(1-p)^2},
    \qquad p\in(1-\delta,1).
\]
Define the quadratic forms
\[
    Q_\lambda[\phi]
    :=
    \int_0^1 |\phi'(p)|^2\,dp
    -
    \lambda\int_0^1 w(p)\phi(p)^2\,dp,
    \qquad \phi\in C_c^\infty(0,1),
\]
and set
\[
    \beta_\ast
    :=
    \sup\left\{
        \lambda\ge 0:
        Q_\lambda[\phi]\ge 0
        \text{ for all }\phi\in C_c^\infty(0,1)
    \right\}.
\]
Then $0<\beta_\ast<\infty$, and there exists a function
\[
    V\in C([0,1])\cap C^2((0,1))
\]
such that
\[
    V(0)=V(1)=0,
    \qquad
    V(p)>0\quad\text{for }p\in(0,1),
\]
and
\[
    -V''(p)=\beta_\ast w(p)V(p),
    \qquad p\in(0,1).
\]
\end{lemma}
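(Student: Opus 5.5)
The plan is an exhaustion argument. I would approximate the singular problem by regular Dirichlet eigenvalue problems on the interior intervals $[\varepsilon,1-\varepsilon]$ and pass to the limit $\varepsilon\to0$. Concavity supplies the compactness, and the endpoint growth bounds on $w$ control the boundary behaviour.

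\emph{Step 1: $0<\beta_\ast<\infty$.}
\begin{itemize}
\item For finiteness, fix any $\phi\in C_c^\infty(0,1)$ with $\int w\phi^2>0$. This is possible because $w>0$. Then $Q_\lambda[\phi]<0$ for all $\lambda>\int|\phi'|^2/\int w\phi^2$.
\item For positivity, first combine the two endpoint bounds with local boundedness on $[\delta,1-\delta]$. This gives a global bound $w(p)\le C/d(p)^2$, where $d(p)=\min(p,1-p)$.
\item Apply the one-dimensional Hardy inequality $\int_0^1|\phi'|^2\ge\frac14\int_0^1\phi^2/d^2$ for $\phi\in C_c^\infty(0,1)$, applied on each half-interval.
\item Together these give $Q_\lambda\ge0$ for $\lambda\le 1/(4C)$, so $\beta_\ast\ge 1/(4C)>0$.
\end{itemize}

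\emph{Step 2: approximating ground states.}
\begin{itemize}
\item For $\varepsilon\in(0,\delta)$, let $\lambda_\varepsilon$ be the principal Dirichlet eigenvalue of $-\phi''=\lambda w\phi$ on $(\varepsilon,1-\varepsilon)$. Since $w$ is bounded on this interval, this is a regular weighted Sturm--Liouville problem. Let $V_\varepsilon>0$ be its eigenfunction, normalized by $V_\varepsilon(1/2)=1$ and extended by zero.
\item By domain monotonicity of the Rayleigh quotient, $\lambda_\varepsilon$ decreases as $\varepsilon\downarrow0$. Its limit is $\beta_\ast$, because every $\phi\in C_c^\infty(0,1)$ is supported in some $(\varepsilon,1-\varepsilon)$.
\item Since $V_\varepsilon''=-\lambda_\varepsilon wV_\varepsilon\le0$, each $V_\varepsilon$ is concave. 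A concave function vanishing at the endpoints with value $1$ at $1/2$ is bounded by $2$. Concavity also makes the $V_\varepsilon$ uniformly Lipschitz on each compact $K\subset(0,1)$.
\item The integrated equation $V_\varepsilon'(p)-V_\varepsilon'(1/2)=-\lambda_\varepsilon\int_{1/2}^p wV_\varepsilon$ then gives equicontinuity of $V_\varepsilon'$ on $K$.
\item By Arzel\`a--Ascoli and a diagonal argument, a subsequence converges in $C^1_{\mathrm{loc}}(0,1)$ to a concave limit $V\ge0$ with $V(1/2)=1$. Concavity gives $V>0$ on $(0,1)$.
\item Passing to the limit in the integral equation gives $V'' =-\beta_\ast wV$. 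This holds a.e.\ in general, and classically with $V\in C^2$ wherever $w$ is continuous. In the application $w=1/G^2$ is continuous on $(0,1)$, and I would make that assumption explicit if needed.
\end{itemize}

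\emph{Step 3: the boundary condition.} I expect $V(0)=V(1)=0$ to be the main obstacle. Concavity guarantees that $V(0^+)$ exists, but a priori it could be positive. In the critical case the ground state typically does not lie in $H^1_0$ (the constant-product profile $\sqrt{p(1-p)}$ is an example), so the variational characterization alone does not give the boundary values.
\begin{itemize}
\item First try: barrier comparison near $0$ with $p^{a}$, $a\in(0,1)$. This is a supersolution when $a(1-a)\ge\lambda C_0$, which needs $\lambda C_0\le\frac14$. Since that is not guaranteed, it would only settle small $\beta_\ast C_0$.
\item Fallback: argue by contradiction with criticality. Suppose $V(0^+)=a>0$. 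Then near $0$ the function $V$ is comparable to a positive constant and $V'$ is bounded. I would use $V$ as an Agmon--Allegretto--Piepenbrink ground state to show $Q_{\beta_\ast+\eta}\ge0$ for some $\eta>0$. Concretely, I would write $\phi=V\psi$, use the ground-state representation $Q_\lambda[V\psi]=\int V^2|\psi'|^2-(\lambda-\beta_\ast)\int wV^2\psi^2$, and use the endpoint Hardy bound on $w$ to absorb the second term. That would contradict the definition of $\beta_\ast$.
\item Alternative I would also try: monotonicity in $\varepsilon$ of $V_\varepsilon$ relative to $V_{\varepsilon'}$ via the Wronskian, to transfer $V_\varepsilon(\varepsilon)=0$ to the limit.
\end{itemize}
The same argument at $p=1$ completes the proof.
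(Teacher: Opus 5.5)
\textbf{Gap: Step 3 is not proved.} Your Steps 1 and 2 are sound. Your caveat is also correct: with $w$ only locally bounded you get $V''=-\beta_\ast wV$ almost everywhere, and $C^2$ requires $w$ to be continuous, which holds for $w=G^{-2}$. The genuine gap is Step 3: you never establish $V(0)=V(1)=0$. The barrier only covers $\beta_\ast C_0\le\tfrac14$, and the Wronskian idea is not carried out.

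The criticality fallback cannot work as designed. It uses only that $V$ is a positive solution at $\beta_\ast$ with $V(0^+)>0$, and such solutions are compatible with the definition of $\beta_\ast$. Take $w(p)=(1-p)^{-2}$, which satisfies all the hypotheses. Sharp Hardy at $p=1$ gives $\beta_\ast=\tfrac14$, and $\sqrt{1-p}$ is a positive solution of $-V''=\tfrac14 wV$ with value $1$ at $p=0$. For this $V$ the inequality $\int V^2|\psi'|^2\ge\eta\int wV^2\psi^2$ fails because of the endpoint $p=1$ (in the variable $-\log(1-p)$ it becomes $\int\psi_t^2\ge\eta\int\psi^2$ on a half-line). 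So no Hardy absorption near $0$ can give $Q_{\beta_\ast+\eta}\ge0$.

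In this example $Q_{\beta_\ast}$ is in fact subcritical: the principal solutions are $\sqrt{1-p}\,\log\frac{1}{1-p}$ at $0$ and $\sqrt{1-p}$ at $1$, and they are independent. Hence no ground-state or criticality argument can pick out the right solution. The solution vanishing at both ends, $\sqrt{1-p}\,\log\frac1{1-p}$, is singled out only by being the limit of your Dirichlet eigenfunctions. (The same example shows that the paper's own shortcut, which asserts that $Q_{\beta_\ast}$ is critical and invokes the ground-state alternative, does not hold in general, so it cannot be borrowed here.) Step 3 therefore has to use the approximants $V_\varepsilon$ themselves.

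\textbf{Missing idea: a dichotomy at each endpoint.} Let $a:=V(0^+)$, which exists by concavity.

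\emph{Case $\int_0^\delta p\,w(p)\,dp=\infty$.} Every positive solution vanishes at $0$. Suppose $a>0$. Concavity gives $V\ge c>0$ on $(0,\delta]$. Integrating $V'(s)=V'(\delta)+\beta_\ast\int_s^\delta wV$ over $s\in(\epsilon,\delta)$ gives $V(\delta)\ge(\delta-\epsilon)V'(\delta)+c\beta_\ast\int_\epsilon^\delta(r-\epsilon)w(r)\,dr$, and the right side diverges as $\epsilon\downarrow0$.

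\emph{Case $\int_0^\delta p\,w(p)\,dp<\infty$.} Here you use the approximants. Combine $V_\varepsilon(\varepsilon)=0$, $0\le V_\varepsilon\le M$, the concavity bound $|V_\varepsilon'(\delta)|\le C$, and the integrated equation. This gives
\[
V_\varepsilon(p)\le Cp+\lambda_{\varepsilon_0}M\Bigl(\int_0^p r\,w(r)\,dr+p\int_p^\delta w(r)\,dr\Bigr),\qquad 0<p<\delta,
\]
uniformly in $\varepsilon\le\varepsilon_0$. The right side tends to $0$ as $p\downarrow0$ by dominated convergence, so $V(0^+)=0$.

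The same argument at $p=1$ finishes the proof, with no appeal to criticality. With this step added, your exhaustion argument becomes a complete and elementary proof of the lemma.
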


\begin{proof}
The endpoint assumptions and Hardy's inequality imply that there exists
$C>0$ such that
\[
    \int_0^1 w(p)\phi(p)^2\,dp
    \le
    C\int_0^1 |\phi'(p)|^2\,dp,
    \qquad \phi\in C_c^\infty(0,1).
\]
Indeed, near $0$ we use
\[
    \int_0^\delta \frac{\phi(p)^2}{p^2}\,dp
    \le
    4\int_0^\delta |\phi'(p)|^2\,dp,
\]
and near $1$ we use
\[
    \int_{1-\delta}^1 \frac{\phi(p)^2}{(1-p)^2}\,dp
    \le
    4\int_{1-\delta}^1 |\phi'(p)|^2\,dp.
\]
On $[\delta,1-\delta]$, local boundedness of $w$ and the usual
one-dimensional Poincare inequality give the remaining bound. Therefore
\[
    \beta_\ast\ge C^{-1}>0.
\]

Also, $\beta_\ast<\infty$. To see this, choose any nonzero
$\eta\in C_c^\infty(0,1)$. Since $w>0$ on $(0,1)$,
\[
    \int_0^1 w(p)\eta(p)^2\,dp>0.
\]
If $Q_\lambda[\eta]\ge0$, then
\[
    \lambda
    \le
    \frac{\displaystyle\int_0^1 |\eta'(p)|^2\,dp}
         {\displaystyle\int_0^1 w(p)\eta(p)^2\,dp}.
\]
Hence $\beta_\ast<\infty$.

The Hardy bound above shows that the potential term is form-bounded with
respect to the Dirichlet energy. Hence each \(Q_\lambda\), and in particular
the critical form \(Q_{\beta_\ast}\), is closable on \(L^2(0,1)\). By the
ground-state alternative for second-order elliptic Schrödinger forms
\citep[see, e.g.,][]{pinchover1988criticality,devyver2014hardy}, applied to
the critical form \(Q_{\beta_\ast}\), the operator
\[
    -\frac{d^2}{dp^2}-\beta_\ast w(p)
\]
admits a positive ground state
\[
    V\in C^2((0,1)),
    \qquad
    V>0\quad\text{on }(0,1),
\]
satisfying
\[
    -V''(p)=\beta_\ast w(p)V(p),
    \qquad p\in(0,1).
\]
This is the standard criticality theorem:
at the critical value, the form is critical, admits a null sequence, and
the null sequence converges locally to the unique positive ground state.

It remains to record the endpoint behavior. The endpoint Hardy bounds put the
associated one-dimensional Sturm-Liouville expression in the lower
semibounded Friedrichs regime at both endpoints. In this regime the Friedrichs
ground state has principal-solution behavior at each endpoint. To see the
local behavior in the model case, near \(p=0\) the equation is comparable to
\[
    -V''(p)=\frac{c}{p^2}V(p),
    \qquad c>0,
\]
whose power-law solutions are \(p^{(1\pm\sqrt{1+4c})/2}\). The principal
solution is the one with exponent \((1+\sqrt{1+4c})/2\), and it vanishes at
the endpoint, while the nonprincipal solution has the singular endpoint
behavior excluded by the Friedrichs extension. The same argument applies at
\(p=1\), with \(p\) replaced by \(1-p\). Principal solutions at the left and
right endpoints are precisely the Friedrichs Dirichlet boundary
representatives for this singular problem. Consequently,
\[
    \lim_{p\searrow0}V(p)=0,
    \qquad
    \lim_{p\nearrow1}V(p)=0.
\]
Defining
\[
    V(0):=0,\qquad V(1):=0
\]
gives
\[
    V\in C([0,1])\cap C^2((0,1)).
\]
This proves the lemma.
\end{proof}

% \begin{theorem}[Constructing a uniform AMM for a win-martingale]
% Let $G$ satisfy Assumption~\ref{ass:win-martingale}. Then there exists a pair $(\beta,V)$ such that $\beta>0$ and $V\in C([0,1])\cap C^2((0,1))$ such that
% \[
%     \beta\cdot V(p)+G(p)^2 \cdot V''(p)=0,
%     \qquad p\in(0,1),
% \]
% and
% \[
%     V(0)=V(1)=0.
% \]
% Moreover, $V>0$ on $(0,1)$, and $V$ is concave and unimodal on $[0,1]$.
% Consequently, for every $L>0$, the pair $(\beta,LV)$ solves the same
% uniformity BVP.
% \end{theorem}

\begin{proof}
Set
\[
    w(p):=\frac{1}{G(p)^2},
    \qquad p\in(0,1).
\]
By Assumption~\ref{ass:win-martingale}(i), $G(p)>0$ for $p\in(0,1)$.
Hence $w$ is positive on $(0,1)$. Since $G$ is locally Lipschitz on
$(0,1)$, it is locally bounded away from zero on compact subintervals of
$(0,1)$. Therefore $w$ is locally bounded on $(0,1)$.

By Assumption~\ref{ass:win-martingale}(iii), there exist constants
$a_0,a_1>0$ and $\delta\in(0,1/2)$ such that
\[
    G(p)\ge a_0p,
    \qquad p\in(0,\delta),
\]
and
\[
    G(p)\ge a_1(1-p),
    \qquad p\in(1-\delta,1).
\]
Therefore
\[
    w(p)=\frac1{G(p)^2}
    \le
    \frac{1}{a_0^2p^2},
    \qquad p\in(0,\delta),
\]
and
\[
    w(p)=\frac1{G(p)^2}
    \le
    \frac{1}{a_1^2(1-p)^2},
    \qquad p\in(1-\delta,1).
\]
Thus $w$ satisfies the hypotheses of
Lemma~\ref{lem:hardy-critical-ground-state}. Consequently, there exist
$\beta>0$ and
\[
    V\in C([0,1])\cap C^2((0,1))
\]
such that
\[
    V(0)=V(1)=0,
    \qquad
    V(p)>0\quad\text{for }p\in(0,1),
\]
and
\[
    -V''(p)=\beta w(p)V(p),
    \qquad p\in(0,1).
\]
Since $w=G^{-2}$, this equation is equivalent to
\[
    \beta\cdot V(p)+G(p)^2 \cdot V''(p)=0,
    \qquad p\in(0,1).
\]

We now prove concavity. For $p\in(0,1)$,
\[
    V''(p)
    =
    -\frac{\beta}{G(p)^2}V(p).
\]
Because
\[
    \beta>0,\qquad G(p)>0,\qquad V(p)>0
\]
on $(0,1)$, we get
\[
    V''(p)<0,
    \qquad p\in(0,1).
\]
Thus $V$ is strictly concave on $(0,1)$ and, by continuity, concave on
$[0,1]$.

It remains to show unimodality. Since $V$ is concave, its one-sided
derivatives are monotone nonincreasing. Hence the derivative can change sign
at most once. Moreover,
\[
    V(0)=V(1)=0,
    \qquad
    V(p)>0\quad\text{for }p\in(0,1).
\]
Therefore $V$ cannot be nonincreasing near $0$, nor can it be nondecreasing
near $1$. Hence there exists $p_\ast\in[0,1]$ such that $V$ is nondecreasing
on $[0,p_\ast]$ and nonincreasing on $[p_\ast,1]$. Thus $V$ is unimodal.

Finally, let $L>0$. Since the differential equation is homogeneous and
linear,
\[
    \beta(LV)(p)+G(p)^2(LV)''(p)
    =
    L\bigl(\beta\cdot V(p)+G(p)^2 \cdot V''(p)\bigr)
    =
    0.
\]
Also,
\[
    (LV)(0)=(LV)(1)=0.
\]
Therefore $(\beta,LV)$ solves the same uniformity BVP for every $L>0$.
\end{proof}

\subsection{Proof of \cref{prop:minimax}}

\begin{proof}
Write $w(p)=G(p)^{-2}$ and set
\[
    M \;:=\; \sup_{p\in(0,1)}\frac{-V''(p)}{w(p)\,V(p)}.
\]
Then $-V''(p)\le M\,w(p)\,V(p)$ for all $p\in(0,1)$. Multiplying both sides by $V(p)>0$ and integrating over $(0,1)$ gives
\[
    \int_0^1 \bigl(-V''(p)\bigr)V(p)\,dp
    \;\le\;
    M\int_0^1 w(p)\,V(p)^2\,dp.
\]
Integrating by parts on the left and using $V(0)=V(1)=0$,
\[
    \int_0^1 |V'(p)|^2\,dp
    \;\le\;
    M\int_0^1 w(p)\,V(p)^2\,dp.
\]
Therefore
\[
    M
    \;\ge\;
    \frac{\displaystyle\int_0^1 |V'(p)|^2\,dp}
         {\displaystyle\int_0^1 w(p)\,V(p)^2\,dp}
    \;\ge\;
    \inf_{\phi\in H^1_0(0,1)\setminus\{0\}}
    \frac{\displaystyle\int_0^1 |\phi'(p)|^2\,dp}
         {\displaystyle\int_0^1 w(p)\,\phi(p)^2\,dp}
    \;=\;
    \beta,
\]
where the last equality is the Rayleigh quotient characterization of the principal eigenvalue of $-u''=\beta\,w\,u$ on $(0,1)$ with Dirichlet boundary conditions (see \cref{lem:hardy-critical-ground-state}). For the eigenfunction $V^*$, the ratio $-V^{*\prime\prime}(p)/(w(p)\,V^*(p))$ equals $\beta$ for every $p\in(0,1)$, so $M=\beta$. Conversely, equality in the Rayleigh quotient holds only for the principal eigenfunction, so $V$ must be a positive multiple of $V^*$.
\end{proof}

\subsection{Proof of \cref{cor:scale}}

If \((\beta,V)\) solves the uniformity BVP for \(G\), then
\[
    \beta\cdot V(p)+G(p)^2 \cdot V''(p)=0,
    \qquad V(0)=V(1)=0.
\]
For \(\tilde G=cG\),
\[
    c^2\beta\cdot V(p)+\tilde G(p)^2 \cdot V''(p)
    =c^2\beta\cdot V(p)+c^2G(p)^2 \cdot V''(p)
    =c^2\bigl(\beta\cdot V(p)+G(p)^2 \cdot V''(p)\bigr)
    =0.
\]
The endpoint and positivity conditions on \(V\) are unchanged, so
\((c^2\beta,V)\) solves the uniformity BVP for \(\tilde G\).

\subsection{Proof of \cref{prop:PVtoWM}}

Define
\[
    G(p)^2:=\beta\cdot\frac{V(p)}{-V''(p)},
    \qquad p\in(0,1).
\]
Since \(V(p)>0\) and \(V''(p)<0\) on \((0,1)\), $G(p)$ is
strictly positive in the interior. The assumed continuous extension of
\(-V/V''\) gives a continuous extension of \(G\) to \([0,1]\) with
\(G(0)=G(1)=0\). The assumed local Lipschitz regularity of \(-V/V''\)
implies that \(G\) is locally Lipschitz on \((0,1)\) as well. Thus \(G\)
satisfies Assumption~\ref{ass:win-martingale}(i)--(ii). Together with the
clock condition in Assumption~\ref{ass:win-martingale}(iv), the proof of
\cref{prop:separable-win-martingale} applies verbatim and implies that the
solution of
\[
    dP_t = \frac{G(P_t)}{h(t)}\,dW_t
\]
is a win-martingale; in particular, \(P_T\in\{0,1\}\) almost surely. Notice
that Assumption~\ref{ass:win-martingale}(iii) is not needed for this
win-martingale conclusion; it is used elsewhere for boundary regularity of the
uniformity BVP.

With this definition,
\[
    \beta\cdot V(p)+G(p)^2 \cdot V''(p)
    =\beta\cdot V(p)+\beta\cdot \frac{V(p)}{-V''(p)}V''(p)
    =0,
\]
so \((\beta,V)\) solves the uniformity BVP under \(G\), and $V$ is uniform under
the constructed win-martingale.

\subsection{Proof of \cref{thm:dynamic-target-loss}}

\begin{proof}
Let \(X_t:=L_tV(P_t)\) denote the marked-to-market value of the AMM position. Since
\(L\) is deterministic and differentiable, Ito's formula and the uniformity BVP
give
\[
\begin{aligned}
    dX_t
    &=\dot L_t\cdot V(P_t)\,dt+L_t\,dV(P_t) \\
    &=\left(\dot L_t-\frac{\beta}{2h(t)^2}L_t\right)V(P_t)\,dt
      +L_t\cdot V'(P_t)\cdot\frac{G(P_t)}{h(t)}\,dW_t,
\end{aligned}
\]
so the resulting wealth process satisfies
\begin{equation*}
    dW_t=dX_t-\dot L_t \cdot V(P_t)\,dt
    =-\frac{\beta}{2h(t)^2}L_t \cdot V(P_t)\,dt+dM_t,
\end{equation*}
where $\{M_t\}$ is the local martingale
\[
    M_t:=\int_0^t L_s \cdot V'(P_s) \cdot \frac{G(P_s)}{h(s)}\,dW_s.
\]
Thus, the predictable instantaneous loss is
\[
    d\LVR_t=\frac{\beta}{2h(t)^2}L_tV(P_t)\,dt,
\]
which proves part (ii).

We then compute expected wealth. Applying Ito's formula to \(V(P_t)\) yields
\[
    dV(P_t)=V'(P_t)\cdot\frac{G(P_t)}{h(t)}\,dW_t
    -\frac{\beta}{2\cdot h(t)^2}\cdot V(P_t)\,dt.
\]
Let \(a(t):=\beta/(2h(t)^2)\). We first justify the expectation identity by
localization, rather than by assuming that the stochastic integral is a true
martingale. Let \(\{\tau_n\}\) localize the stochastic integral above. For
\(t<T\), applying the stopped identity on \([0,t\wedge\tau_n]\) and taking
expectations gives
\[
    \E[V(P_{t\wedge\tau_n})]
    =
    V(p_0)
    -
    \E\!\left[\int_0^{t\wedge\tau_n} a(s)V(P_s)\,ds\right].
\]
Since \(V\) is continuous on \([0,1]\), it is bounded; hence
\(V(P_{t\wedge\tau_n})\to V(P_t)\) in expectation. The integral term is
nonnegative and increases to \(\int_0^t a(s)V(P_s)\,ds\), whose expectation is
finite because \(V\) is bounded and \(\int_0^t h(s)^{-2}\,ds<\infty\). Thus,
by monotone convergence,
\[
    m(t):=\E[V(P_t)]
    =
    V(p_0)-\int_0^t a(s)m(s)\,ds.
\]
Therefore \(m\) is absolutely continuous and satisfies
\[
    m'(t)=-\frac{\beta}{2h(t)^2}m(t) \implies m(t)=V(p_0)\exp\!\left(-\int_0^t\frac{\beta}{2h(s)^2}\,ds\right).
\]
where $m(0)=V(p_0)$. Because \(L_t\) is deterministic, substituting the liquidity schedule
yields
\[
    \E[L_t \cdot V(P_t)]
    =\frac{2w_0}{\beta}\cdot h(t)^2 \cdot D'(t).
\]
Taking expectations in the definition
\[
    W_t=c_0+L_tV(P_t)-\int_0^t \dot L_sV(P_s)\,ds
\]
and using the deterministic nature of \(L\), we obtain
\[
    \E[W_t]=c_0+L_tm(t)-\int_0^t \dot L_sm(s)\,ds.
\]
Differentiating this absolutely continuous function gives
\[
    \frac{d}{dt}\E[W_t]
    =L_t m'(t)
    =-\frac{\beta}{2h(t)^2}L_t m(t)
    =-\frac{\beta}{2h(t)^2}\E[L_tV(P_t)]
    =-w_0D'(t).
\]
Integrating from \(0\) to \(t\) and using \(D(0)=0\) gives
\[
    \E[W_t]=w_0-w_0 \cdot D(t)=w_0(1-D(t)),
\]
which proves part (i).
\end{proof}

\subsection{Proof of \cref{cor:general-withdrawal-only}}

\begin{proof}
The liquidity rule in \cref{thm:dynamic-target-loss} is a positive constant
multiple of
\[
    t\mapsto h(t)^2\exp\!\left(\int_0^t\frac{\beta}{2h(s)^2}\,ds\right)D'(t),
\]
so the given monotonicity condition implies that \(L_t\) is nonincreasing. Then,
\[
    d\left(-\int_0^t\dot L_s \cdot V(P_s)\,ds\right)=-\dot L_t \cdot V(P_t)\,dt\ge0,
\]
because \(V(P_t)\ge0\). The condition \(c_0=w_0-L_0V(p_0)\ge0\) guarantees that the liquidity provider is solvent initially.
\end{proof}

\end{document}